\DeclareMathOperator{\grad}{grad}
\DeclareMathOperator{\rank}{rank}
\DeclareMathOperator{\supp}{supp}
\newcommand{\R}{{\mathbb{R}}}
\newcommand{\Z}{{\mathbb{Z}}}
\newcommand{\beq}{\begin{equation}}
\newcommand{\eeq}{\end{equation}}
\newcommand{\bea}{\begin{eqnarray}}
\newcommand{\eea}{\end{eqnarray}}
\newcommand{\ben}{\begin{eqnarray*}}
\newcommand{\een}{\end{eqnarray*}}
\newcommand{\ra}{\rightarrow}
\newcommand{\cd}{\partial}
\newcommand{\less}{\backslash}
\def \d{\mathrm{d}}
\newcommand{\dstar}{\delta}
\newcommand{\ip}[1]{\langle #1 \rangle}
\newcommand{\ignore}[1]{}
\newcommand{\ee}{\mathscr{E}}
\newcommand{\hh}{\mathscr{H}}
\newcommand{\lll}{\mathscr{L}}
\renewcommand{\aa}{\mathscr{A}}
\newcommand{\vol}{{\rm vol}}
\newcommand{\eps}{\epsilon}
\renewcommand{\phi}{\varphi}
\theoremstyle{plain}
\newtheorem{thm}{Theorem}
\newtheorem{lemma}[thm]{Lemma}
\newtheorem{prop}[thm]{Proposition}
\newtheorem{defn}[thm]{Definition}
\newtheorem{remark}[thm]{Remark}
\newcommand{\news}{\setcounter{equation}{0}}
\newenvironment{proof}{\noindent{\it Proof:\, }}{\hfill$\Box$\vspace*{0.5cm}
}
\begin{document}

\title{Compactons and semi-compactons in the extreme baby Skyrme model}
\author{
J.M. Speight\thanks{E-mail: {\tt speight@maths.leeds.ac.uk}}\\
School of Mathematics, University of Leeds\\
Leeds LS2 9JT, England}

\date{}
\maketitle

\begin{abstract}
The static baby Skyrme model is investigated in the extreme limit 
where the energy functional contains only the potential and Skyrme terms,
but not the Dirichlet energy term. It is shown that the model with
potential $V=\frac12(1+\phi_3)^2$ possesses solutions with extremely
unusual localization properties, which we call semi-compactons. These
minimize energy in the degree 1 homotopy class, have support contained
in a semi-infinite rectangular strip, and decay along the length of the strip
as $x^{-\log x}$. By gluing together several semi-compactons, it is
shown that
every homotopy class has linearly stable solutions of arbitrarily high,
but quantized, energy. For various other choices of potential,
compactons are constructed with support in a closed disk, or in a
closed annulus. In the latter case, one can construct higher
winding compactons, and complicated superpositions in which several
closed string-like compactons are nested within one another. The
constructions make heavy use of the invariance of the
model under area-preserving diffeomorphisms, and of a topological
lower energy bound, both of which are established in a general
geometric setting. All the solutions presented are classical, that is,
they are (at least) twice continuously differentiable and satisfy
the Euler-Lagrange equation of the model everywhere.
\end{abstract}

\maketitle

\section{Introduction}
\news

Solitons are stable, spatially localized solutions of nonlinear 
field theories. Ordinarily, ``spatially localized'' means that the
field $\phi(t,x)$ approaches some constant vacuum value 
$\phi_0$
asymptotically  
as $|x|\ra\infty$, usually exponentially in $|x|$
(e.g.\ KdV and sine-Gordon solitons, abelian Higgs vortices) 
sometimes as a power law $|x|^{-p}$ (e.g.\ sigma model lumps, instantons).
However, there are some systems where the solitons' spatial localization
is much more severe: $\phi(x)=\phi_0$ exactly outside some compact region
of space. Since they have compact support, these solitons are called
{\em compactons} in the literature. They were first discovered 
in generalized KdV equations \cite{roshym}, then in nonlinear Klein-Gordon
models 
with  W-shaped potentials 
\cite{aro}, that is, potentials with two degenerate vacua at each of which the
potential has a V-shaped singularity. 
All these compactons live in one (spatial) dimension.
Moving to two dimensions,
 compactons have been constructed for the baby-Skyrme model,
with energy density 
\beq
\ee=\frac12 \lambda \cd_i\phi\cdot\cd_i\phi +
\frac14(\cd_i\phi\times\cd_j\phi)\cdot(\cd_i\phi\times\cd_j\phi)
+V(\phi)
\eeq
again in the case where $V(\phi)$ has a V-shaped singularity at the vacuum,
for example $V=\sqrt{1-\phi_3}$ \cite{ada2}.  Here 
$\phi:\R^2\ra S^2\subset\R^3$ and $\lambda$ is a positive 
constant. 
Since $V$ is singular, it is not surprising that the compactons
are singular, and this makes their interpretation somewhat problematic.
In what sense, precisely, are they solutions of the model?

In both \cite{aro} and \cite{ada2}, the singularity of $V$ can
be interpreted as $V$ having infinite second derivative at the 
vacuum, so that the ``mesons'' of the theory (propagating small perturbations
about the vacuum) have infinite mass. An alternative mechanism to give
the mesons infinite mass, without making $V$ singular,
 is to take the limit $\lambda\ra 0$ in the 
above model \cite{gis,ada1}. 
This limit, variously called the pure, restricted or
(as in this paper) extreme baby-Skyrme model, has the interesting property 
of being invariant under all area-preserving diffeomorphisms of the spatial
plane, and is claimed to have applications in condensed matter physics 
\cite{gis}. For various choices of (continuously differentiable) potential
$V$, it has been found to support compactons.
One problem with all the compactons found in \cite{gis}, and
some of those found in \cite{ada1}, is that they are not even once 
continuously differentiable, so again it is not clear in what sense they are
solutions of the 
model.
Certainly they are not classical solutions of the 
field equation, which is a second order nonlinear PDE.\,  
They may be solutions in
the weaker sense that they locally extremize the energy functional,
but to make precise sense of this is rather technical: 
given that the fields themselves are not $C^1$, what should be the allowed
space of variations (usually taken to be $C^1$ with compact support)?
Since the compactons in \cite{ada1} saturate a topological lower energy bound,
it is likely that a precise formulation of their status as solutions is 
possible. The compactons in \cite{gis} are more problematic.

In this paper, we begin by
analyzing the extreme baby Skyrme model in a rather general
geometric
setting, taking physical space to be any orientable two-manifold $M$
and target space to be any compact Riemann surface $N$ (so the primary case
of interest is $M=\R^2$ and $N=S^2\subset \R^3$). The potential will be taken
to be of the form $V=\frac12 U^2$ where $U$ is a non-negative $C^1$
function on $N$ with isolated zeros.
By a {solution} of the model we will strictly mean a
twice continuously differentiable map $M\ra N$ satisfying the
Euler-Langrange equation for $E$ everywhere. 
 In this setting, we prove a topological
lower energy bound, saturated by solutions of a first
order ``Bogomol'nyi'' equation.
 Solutions of this equation have a natural
interpretation as area-preserving maps from (part of) $M$ to
(almost all) $N$, with respect to a deformed area form on $N$ (determined
by $U$). We show that all Bogomol'nyi solutions are solutions of the field
equation and conversely (on $M=\R^2$) that all solutions of the 
field equation are (piecewise) Bogomol'nyi. The
 bound is a generalization of various special cases discovered
previously \cite{ada1,izq,pie,war1}, and our main contribution here is to
place these results within a geometric framework, and give a geometric 
interpretation of the Bogomol'nyi equation.

We then consider the
specific case $M=\R^2$, $N=S^2$, $U=1+\phi_3$ in detail. 
By exploiting the model's symmetry under area-preserving diffeomorphisms, 
we construct
a degree $1$ solution of this model with extremely unusual
localization properties,
which we call a {\em semi-compacton}.
This solution is constant outside a semi-infinite
rectangular strip, decays like $x^{-\log x}$ along the length of the
strip, and minimizes energy within its homotopy class. By gluing
several (anti-) semi-compactons together, we show that in every homotopy
class the model has solutions of arbitrarily high (but quantized)
energy, all of which are at least marginally stable (in particular,
they are {\em not} saddle points). We also prove that the critical set
of any solution of the model can have no bounded connected components so,
in particular, solutions can never have isolated critical points. 
We compare
our results with those of Adam et al \cite{ada1}, who construct
exponentially localized fields in this model, clarifying precisely when
their fields are solutions in the strong sense used here.

We go on to consider various cases where $U$ is not $C^1$ but $V=\frac12 U^2$
still is, which is enough for the critical parts of the general theory
to survive,
giving a necessary condition on $U$ for the existence of
compactons. In the case $U=(1+\phi_3)^\alpha$, $\frac12\leq\alpha<1$,
we give a geometric construction of the compactons obtained in \cite{ada1},
and show how their key qualitative features (e.g.\ energy and area)
can be found without solving any equations.
In the case $U=(1+\phi_3)^\alpha(1-\phi_3)^\beta$, 
where $\frac12\leq\alpha,\beta<1$, we construct annular (or closed string-like)
compacton solutions which minimize energy in their homotopy class, generalizing
results in \cite{ada1} (which correspond to the degenerate case where
the annulus is a punctured disk).

In the final section, we consider the model on a compact domain, with $U=0$,
showing that generically the model has no nontrivial solutions at all. We 
conclude by suggesting some interesting open questions concerning the 
dynamics of semi-compactons.

\section{The Bogomol'nyi argument}
\news

It is convenient to place the extreme
baby Skyrme model within a more general geometric
framework. The model has a single scalar field $\phi:M\ra N$ 
where $(M,g)$ is an oriented Riemannian two-manifold, 
representing
physical space, and $(N,h,J)$ is a compact Riemann surface
(with metric $h$ and almost complex structure $J$), the target space. 
Denote
by $\omega=h(J\cdot,\cdot)$ the K\"ahler form (or area form) on $N$. 
Let $U:N\ra [0,\infty)$ be
a $C^1$ function with isolated zeros, the vacua of the theory.
In this section, we will take $M$ either to be compact, or to be Euclidean
$\R^2$. In the latter case (which is of most direct physical interest)
 we impose the boundary
condition
\beq\label{bc}
\phi(x)\ra\phi_0\in U^{-1}(0)\qquad \mbox{as $|x|\ra\infty$}
\eeq
sufficiently fast that $\int_M\phi^*\omega$ converges. 
Throughout, $\phi$ is assumed to be at least $C^2$.
The energy functional of the model is
\bea\label{class}
E(\phi)&=&\frac12\int_M|\phi^*\omega|^2\vol_M+\frac12\int_M(U\circ\phi)^2\vol_M
\\
&=&\frac12\|\phi^*\omega\|^2+\frac12\|U\circ\phi\|^2
=E_4(\phi)+E_0(\phi)\eea
where $\vol_M$ denotes the volume form on $M$, and we have introduced
the notation $\|\cdot\|$ for the $L^2$ norm of a function, or form, on
$M$. It will be convenient to denote the associated $L^2$ inner product
by $\ip{\cdot,\cdot}$, so for forms $\alpha,\beta\in\Omega^p(M)$,
\beq
\ip{\alpha,\beta}=\int_M\alpha\wedge *\beta,\qquad
\|\alpha\|^2=\ip{\alpha,\alpha},
\eeq
where $*:\Omega^p(M)\ra\Omega^{2-p}(M)$ is the Hodge map.
To obtain the usual baby Skyrme model, one chooses $N=S^2\subset\R^3$,
the unit sphere,
with the induced metric and with almost complex structure
$J:T_\phi S^2\ra T_\phi S^2$, $X\mapsto \phi\times X$, so that
$\omega(X,Y)=(\phi\times X)\cdot Y=\phi\cdot(X\times Y)$. In this case,
with respect to any oriented local coordinate system $(x_1,x_2)$ on $M$,
\beq
\phi^*\omega=
\phi\cdot\left(\frac{\cd\phi}{\cd x_1}\times\frac{\cd\phi}{\cd x_2}\right)
\d x_1\wedge\d x_2.
\eeq

We begin by establishing a topological lower energy bound for $E=E_4+E_0$ of
Bogomol'nyi type. The argument has been discovered in particular cases
by several authors \cite{ada1,izq,pie,war1}, and our aim here is to place these 
results
in a general geometric framework.

\begin{prop} For all $\phi:M\ra N$,
$$
E(\phi)\geq\pm\ip{U}\int_M\phi^*\omega
$$
where $\ip{U}$ is the average value of $U$ on $N$, with equality
if and only  if
$$
\phi^*\omega=\pm *U\circ\phi.
$$
\end{prop}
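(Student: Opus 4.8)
The plan is to run a standard Bogomol'nyi completion-of-the-square argument, but carried out with the $L^2$ inner product on forms over $M$ rather than a pointwise algebraic manipulation, since this is the cleanest way to get the sign-definiteness for free. First I would observe that $\phi^*\omega$ is a $2$-form on $M$ and $U\circ\phi$ is a function, so $*(U\circ\phi)$ is also a $2$-form; thus the expression $\phi^*\omega \mp *(U\circ\phi)$ is a well-defined element of $\Omega^2(M)$ and its $L^2$ norm is non-negative. Expanding $\|\phi^*\omega \mp *(U\circ\phi)\|^2 \geq 0$ using bilinearity of $\ip{\cdot,\cdot}$ gives
\[
\|\phi^*\omega\|^2 + \|{*}(U\circ\phi)\|^2 \mp 2\ip{\phi^*\omega,\,{*}(U\circ\phi)} \geq 0.
\]
Since the Hodge star is an isometry, $\|{*}(U\circ\phi)\|^2 = \|U\circ\phi\|^2$, so the first two terms are exactly $2E_4(\phi) + 2E_0(\phi) = 2E(\phi)$. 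The remaining task is to identify the cross term $\ip{\phi^*\omega,\,{*}(U\circ\phi)}$ with $\ip{U}\int_M\phi^*\omega$.

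For the cross term, I would use the definition $\ip{\alpha,\beta}=\int_M\alpha\wedge{*}\beta$ with $\alpha=\phi^*\omega$ and $\beta={*}(U\circ\phi)$. Then ${*}\beta = {*}{*}(U\circ\phi) = U\circ\phi$ (for a $0$-form on an oriented surface, $**=\mathrm{id}$), so $\ip{\phi^*\omega,\,{*}(U\circ\phi)} = \int_M (U\circ\phi)\,\phi^*\omega$. This is now a purely topological-looking quantity: it is the integral over $M$ of the pullback of the $2$-form $U\omega$ on $N$, i.e.\ $\int_M \phi^*(U\omega)$. The key step — and the one place a genuine idea is needed rather than bookkeeping — is to bound $\int_M \phi^*(U\omega)$ by $\ip{U}\int_M\phi^*\omega$. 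I do \emph{not} expect these to be equal in general (that would be false: $U$ is not constant). Instead I expect the inequality to come from the completion of the square applied with the \emph{right} constant, so I should revisit the first step: rather than completing the square against $*(U\circ\phi)$ directly, complete it against $*\bigl((U\circ\phi)-c\bigr)$ for a constant $c$ to be chosen, or equivalently note that the sharp bound is obtained by replacing $U\circ\phi$ in the cross term estimate by its average. Concretely: pointwise, $\phi^*\omega = \rho\,\vol_M$ for some density $\rho$, and one has the pointwise inequality $\tfrac12\rho^2 + \tfrac12(U\circ\phi)^2 \geq (U\circ\phi)\rho \geq$ ... — but to land on $\ip{U}$ rather than $U\circ\phi$ one must integrate first, then compare $\int_M(U\circ\phi)\rho\,\vol_M$ with $\ip{U}\int_M\rho\,\vol_M$ using that $\phi$ sweeps out $N$ (counted with multiplicity), so that the average of $U\circ\phi$ against the measure $\phi_*(\rho\,\vol_M)$ relates to the average $\ip{U}$ of $U$ over $N$.

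On reflection, the honest structure is: complete the square as $0\le\tfrac12\|\phi^*\omega\mp{*}(U\circ\phi)\|^2$, which yields $E(\phi)\ge\pm\int_M(U\circ\phi)\,\phi^*\omega$; then observe this is \emph{not yet} the claimed bound, so the real content must be that the proposition as stated uses a sharper argument in which $U\circ\phi$ gets replaced by the constant $\ip{U}$. I would therefore instead write $\int_M\phi^*(U\omega)$ and split $U\omega = \ip{U}\omega + (U-\ip{U})\omega$; the first piece pulls back to $\ip{U}\phi^*\omega$ giving exactly the claimed right-hand side, and I would need to argue the second piece, $\int_M\phi^*\bigl((U-\ip{U})\omega\bigr)$, has a sign or is controlled — most likely by writing $(U-\ip{U})\omega = d\eta$ for some $1$-form $\eta$ on $N$ (possible since $\int_N(U-\ip{U})\omega=0$ and $H^2$ is one-dimensional), whence the second integral becomes $\int_M d(\phi^*\eta)$, which vanishes on compact $M$, and on $M=\R^2$ vanishes by the decay in the boundary condition~\eqref{bc} (this is where convergence of $\int_M\phi^*\omega$ and fast approach to $\phi_0$ are used). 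Hence the cross term equals $\ip{U}\int_M\phi^*\omega$ exactly, and the square-completion inequality becomes the proposition, with equality iff $\phi^*\omega\mp{*}(U\circ\phi)=0$ as an $L^2$ — hence, by continuity, pointwise — identity, i.e.\ $\phi^*\omega=\pm{*}U\circ\phi$. The main obstacle is thus the exactness step: producing $\eta$ with $d\eta=(U-\ip{U})\omega$ globally on $N$ and checking that $\phi^*\eta$ decays fast enough on $\R^2$ for Stokes to apply with no boundary contribution; everything else is routine Hodge-theory bookkeeping.
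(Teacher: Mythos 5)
Your proposal is correct and follows essentially the same route as the paper: complete the square in $L^2$, identify the cross term as $\int_M\phi^*(U\omega)$, and use that $U\omega$ is cohomologous to $\ip{U}\omega$ on $N$, so the exact remainder integrates to zero (by Stokes on compact $M$, or by the decay in the boundary condition on $\R^2$). The only cosmetic difference is that you pin down the constant $\ip{U}$ directly from the fact that integration is an isomorphism on $H^2(N)$, whereas the paper extracts the same coefficient from the identity $\ip{\omega,U\omega}=a\,{\rm Vol}(N)$ using coclosedness of $\omega$; these are the same cohomological fact.
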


\begin{proof} 
Clearly 
\bea
0&\leq&\frac12\|*\phi^*\omega\mp U\circ\phi\|^2
=E\mp\int_M\phi^*(U\omega).
\eea
By dimensions, $U\omega$ is a closed 2-form and, since $H^2(N)=\R$, 
there exists a
constant $a\in\R$ and $\alpha\in\Omega^1(N)$ such that
\beq
U\omega=a\omega+\d\alpha.
\eeq
Then
\bea
\ip{\omega,U\omega}&=&a\|\omega\|^2+\ip{\omega,\d\alpha}
=a{\rm Vol}(N)+\ip{\dstar\omega,\alpha}
=a{\rm Vol}(N)
\eea
since $\omega$ is coclosed. But
\beq
\ip{\omega,U\omega}=\int_NU\omega=\ip{U}{\rm Vol}(N)
\eeq
where $\ip{U}$ denotes the average value of $U:N\ra\R$.
The result immediately follows.
\end{proof}

We remark that, since $\omega$ is closed, $\int_M\phi^*\omega$ is a
homotopy invariant of $\phi$. In the case of most interest,
$N=S^2$, the bound becomes
\beq
E(\phi)\geq 4\pi\ip{U}|n|
\eeq
where $n\in\Z$ is the degree of $\phi$. 

The Bogomol'nyi equation $\phi^*\omega=*U\circ\phi$ has an interesting
geometric interpretation which we will use frequently in later sections.
Let $N_0=U^{-1}(0)\subset N$, the set of vacua of the model, and $N'=N\less
N_0$, the target space with the vacua removed. We can equip $N'$ with
a deformed area form $\Omega=\omega/U$. Note that this area form blows up
as one approaches $N_0$, the boundary of $N'$. Given a map $\phi:M\ra N$,
denote by $M_\phi$ its critical set, that is
\beq\label{Mphidef}
M_\phi=\{x\in M\: :\: \rank(\d\phi_x)<2\}.
\eeq
At any $x\in M_\phi$, $(\phi^*\omega)_x=0$, since we can always evaluate this
2-form on a basis of vectors one of which is in $\ker\d\phi_x$. Hence,
any solution of the Bogomol'nyi equation maps $M_\phi$ into $N_0$ (sends
critical points to vacua), and on $M'=M\less M_\phi$ satisfies
\beq
\phi^*\Omega=\frac{\phi^*\omega}{U\circ\phi}=*1=\vol_M.
\eeq
That is (as observed for a special case in \cite{pie}):

\begin{remark}\label{nss}
 Bogomol'nyi solutions are area preserving maps from 
$(M',\vol_M)$
to $(N',\Omega)$.
\end{remark}

Note that, as usual, the 
Bogomol'nyi equation is a nonlinear first order PDE for $\phi$.
This is in contrast to the Euler-Lagrange equation for $E$, which is
second order. In analogy with harmonic map theory,
it is convenient to make the following
definition.

\begin{defn} The
{\em tension field} of $\phi:M\ra N$ is
$$
\tau(\phi)=-J\d\phi\sharp\dstar(\phi^*\omega)+(U\grad U)\circ\phi.
$$
Here $\dstar=-*\d*:\Omega^p(M)\ra\Omega^{p-1}(M)$, the coderivative adjoint to
$\d$, and $\sharp$ denotes the metric isomorphism $T^*M\ra TM$ induced
by $g$. 
Note that $\tau(\phi)$ is a section of $\phi^{-1}TN$, the vector bundle over 
$M$ with fibre $T_{\phi(x)}N$ over $x\in M$. We will also consistently
denote the $0$ form $*\phi^*\omega$ by $F_\phi:M\ra \R$, so
$$
\phi^*\omega=F_\phi\vol_M.
$$
\end{defn}

Given a variation $\phi_t$ of $\phi$, with infinitesimal generator
$X=\cd_t\phi_t|_{t=0}\in\Gamma(\phi^{-1}TN)$ a straightforward calculation
\cite{spesve1} shows that
\beq
\left.\frac{d\:}{dt}E(\phi_t)\right|_{t=0}=\ip{X,\tau(\phi)}=\int_M h(X,\tau(\phi))\vol_M.
\eeq
Hence, the Euler-Lagrange equation is
\beq\label{eom}
\tau(\phi)=0.
\eeq
Any solution of the Bogomol'nyi equation
\beq
\label{bog}
F_\phi=\pm U\circ\phi
\eeq
minimizes energy in its
homotopy class, so must satisfy the field equation (\ref{eom}) by the
fundamental lemma of the calculus of variations. It is reassuring to
verify this fact directly. The key observation is contained in the
following lemma.

\begin{lemma}
\label{key}
Let $\phi:M\ra N$ and $X$ be a vector field on $M$. Then
$$
h(\d\phi X,\tau(\phi))=-\frac12\d(F_\phi^2-(U\circ\phi)^2)X.
$$
\end{lemma}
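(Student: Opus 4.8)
The plan is to substitute a simplified expression for $\dstar(\phi^*\omega)$ into the definition of $\tau(\phi)$, split the pairing $h(\d\phi X,\tau(\phi))$ into a ``Skyrme'' part and a ``potential'' part, and evaluate each using, respectively, K\"ahler and Hodge-star identities on the surface $M$ and the chain rule on $N$.

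First I would simplify the Skyrme contribution. Since $\phi^*\omega=F_\phi\vol_M$ is a top form on the two-manifold $M$, we have $*(\phi^*\omega)=F_\phi$ and hence $\dstar(\phi^*\omega)=-*\d*(\phi^*\omega)=-*\d F_\phi$, a $1$-form on $M$. Therefore
$$
\tau(\phi)=J\,\d\phi\,\sharp(*\d F_\phi)+(U\grad U)\circ\phi ,
$$
so that
$$
h(\d\phi X,\tau(\phi))=h\big(\d\phi X,\,J\,\d\phi\,\sharp(*\d F_\phi)\big)+(U\circ\phi)\,h\big(\d\phi X,(\grad U)\circ\phi\big).
$$
For the second (potential) term, the chain rule gives $h(\d\phi X,(\grad U)\circ\phi)=(\d U)(\d\phi X)=\d(U\circ\phi)(X)$, so this term equals $(U\circ\phi)\,\d(U\circ\phi)(X)=\frac12\d\big((U\circ\phi)^2\big)(X)$.

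For the first (Skyrme) term I would use the defining property $\omega(A,B)=h(JA,B)$ of the K\"ahler form together with the symmetry of $h$ and antisymmetry of $\omega$, which give $h(C,JD)=-\omega(C,D)$. Applying this with $C=\d\phi X$ and $D=\d\phi\,\sharp(*\d F_\phi)$ gives
$$
h\big(\d\phi X,\,J\,\d\phi\,\sharp(*\d F_\phi)\big)=-(\phi^*\omega)\big(X,\sharp(*\d F_\phi)\big)=-F_\phi\,\vol_M\big(X,\sharp(*\d F_\phi)\big).
$$
Now I would invoke the two elementary identities valid on an oriented Riemannian surface: $\iota_{\gamma^\sharp}\vol_M=*\gamma$ for any $1$-form $\gamma$, and $**=-1$ on $\Omega^1(M)$. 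With $\gamma=*\d F_\phi$ these yield $\vol_M\big(X,\sharp(*\d F_\phi)\big)=-(**\d F_\phi)(X)=\d F_\phi(X)$, so the Skyrme term equals $-F_\phi\,\d F_\phi(X)=-\frac12\d(F_\phi^2)(X)$.

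Adding the two contributions gives $h(\d\phi X,\tau(\phi))=-\frac12\d\big(F_\phi^2-(U\circ\phi)^2\big)X$, as claimed. Every individual step is routine; the only genuine obstacle is sign bookkeeping, namely ensuring that the conventions for $\dstar=-*\d*$, for $**=-1$ on $1$-forms in two dimensions, for the contraction identity $\iota_{\gamma^\sharp}\vol_M=*\gamma$, and for the antisymmetry of $\omega$ and $\vol_M$ are all used consistently.
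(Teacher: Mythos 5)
Your proposal is correct and follows essentially the same route as the paper: simplify $\dstar\phi^*\omega$ to (minus) $*\d F_\phi$, split the pairing into Skyrme and potential pieces, evaluate the potential piece by the chain rule, and evaluate the Skyrme piece via $h(C,JD)=-\omega(C,D)$ together with $\phi^*\omega=F_\phi\vol_M$. The only cosmetic difference is that the paper packages $\sharp *\d F_\phi$ as $J_M\grad F_\phi$ and uses $\vol_M(X,J_MY)=g(X,Y)$, whereas you keep the Hodge star explicit and use $\iota_{\gamma^\sharp}\vol_M=*\gamma$ with $**=-1$ on $1$-forms; these are the same identity in different notation, and your sign bookkeeping is consistent throughout.
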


\begin{proof}   One sees that
\beq\label{q}
\sharp\dstar\phi^*\omega=-\sharp *\d *\phi^*\omega=
-\sharp *\d F_\phi
=-J_M\grad F_\phi
\eeq
where $J_M$ is the almost complex structure induced by the orientation
on $M$. Hence
\bea
h(\d\phi X,\tau(\phi))&=&
h(\d\phi X,J\d\phi J_M\grad F_\phi+(U\grad U)\circ\phi)\nonumber \\
&=&
-\phi^*\omega(X,J_M\grad F_\phi)+(U\d U)(\d\phi X)\nonumber\\
&=&
-F_\phi g(X,\grad F_\phi)+(U\circ\phi)g(X,\grad(U\circ\phi))\nonumber\\
&=&
-\frac12g(X,\grad(F_\phi^2-(U\circ\phi)^2))=-\frac12\d(F_\phi^2-(U\circ\phi)^2)X.
\eea
\end{proof}

{We remark that this Lemma remains true under the weaker assumption that
$V=\frac12 U^2$ is $C^1$ (rather than $U$ itself). One replaces
$U\grad U$ and $U\d U$ by $\grad V$ and $\d V$ throughout the proof.}

\begin{prop}\label{prop1} Let $\phi:M\ra N$ satisfy (either of the)
Bogomol'nyi equation(s), $\phi^*\omega=\pm * U\circ\phi$ everywhere.
Then $\phi$ satisfies the field equation $\tau(\phi)=0$.
\end{prop}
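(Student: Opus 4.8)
The plan is to deduce the field equation $\tau(\phi)=0$ from the Bogomol'nyi equation by showing that $\tau(\phi)$ is forced to vanish, using Lemma~\ref{key} as the main tool. Suppose $\phi$ satisfies $\phi^*\omega=\pm *U\circ\phi$, i.e.\ $F_\phi=\pm U\circ\phi$. Then $F_\phi^2-(U\circ\phi)^2\equiv 0$ on all of $M$, so its exterior derivative vanishes identically. By Lemma~\ref{key}, for every vector field $X$ on $M$ we have
\beq
h(\d\phi X,\tau(\phi))=-\tfrac12\d(F_\phi^2-(U\circ\phi)^2)X=0.
\eeq
Thus $\tau(\phi)$ is $h$-orthogonal, at each point $x\in M$, to the image of $\d\phi_x:T_xM\ra T_{\phi(x)}N$.

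The remaining task is to upgrade ``$\tau(\phi)$ is orthogonal to $\mathrm{image}\,\d\phi_x$'' to ``$\tau(\phi)=0$''. On the open set $M'=M\less M_\phi$ where $\rank(\d\phi_x)=2$, the image of $\d\phi_x$ is all of the two-dimensional space $T_{\phi(x)}N$, so orthogonality forces $\tau(\phi)_x=0$ immediately. It therefore remains to handle the critical set $M_\phi$. Here I would argue by continuity: $\tau(\phi)$ is a continuous section of $\phi^{-1}TN$ (since $\phi\in C^2$ and $U\in C^1$), and it vanishes on $M'$; so it vanishes on $\overline{M'}$. The only points of $M$ possibly not in $\overline{M'}$ lie in the interior of $M_\phi$, where $\phi$ is locally constant, hence $F_\phi=0$ there, so by the Bogomol'nyi equation $U\circ\phi=0$ and therefore $\grad U\circ\phi=0$ as well (interior critical points are mapped into $N_0=U^{-1}(0)$, and at an isolated zero of $U$ the term $U\grad U$ still vanishes — indeed $\grad V=U\grad U$ is continuous and vanishes on $N_0$); meanwhile $\dstar(\phi^*\omega)=-*\d F_\phi=0$ on the interior of $M_\phi$ since $F_\phi\equiv 0$ there. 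Hence $\tau(\phi)=0$ on the interior of $M_\phi$ directly from its definition, and combining the three cases gives $\tau(\phi)\equiv 0$ on $M$.

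The main obstacle is the treatment of the critical set $M_\phi$, specifically points on $\partial M_\phi$ that are critical but not in the interior of $M_\phi$: there $\d\phi_x$ has rank $\le 1$, so orthogonality to its image is not by itself enough, and one genuinely needs the continuity argument (or, equivalently, a direct check that both terms in $\tau(\phi)$ extend continuously by zero across such points). One should also keep in mind the remark following Lemma~\ref{key}: the whole argument goes through verbatim if only $V=\tfrac12U^2$ is assumed $C^1$, replacing $U\grad U$ by $\grad V$, which is the form actually needed in the later sections of the paper.
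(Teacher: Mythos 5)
Your proof is correct, and its first half --- using Lemma \ref{key} to show that $\tau(\phi)$ is $h$-orthogonal to the image of $\d\phi_x$, hence vanishes at every regular point --- is exactly the paper's argument. Where you genuinely diverge is the critical set. The paper treats \emph{every} critical point $x$ by one uniform direct computation: $F_\phi(x)=0$ forces $\phi(x)\in U^{-1}(0)$; since $U\geq 0$ is $C^1$, $\phi(x)$ is a minimum of $U$, so $\d U_{\phi(x)}=0$, whence $\grad F_\phi(x)=\pm\sharp\,\d U_{\phi(x)}\circ\d\phi_x=0$ and both terms of $\tau(\phi)$ vanish at $x$. You instead split $M_\phi$ into its interior (direct computation, needing only $F_\phi\equiv 0$ there and $\grad V=0$ on $N_0$) and the remaining boundary points, which you handle by continuity of $\tau(\phi)$ from the dense open set where it is already known to vanish. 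Both routes are valid, but your continuity argument buys something concrete: it never differentiates $U\circ\phi$ by the chain rule, so, as you note, it survives under the weaker hypothesis that only $V=\frac12 U^2$ is $C^1$. The paper explicitly remarks after Proposition \ref{prop2} that its own proof of this proposition ``makes essential use of the differentiability of $U$, so does not extend to this weaker case''; your proof removes that limitation, and is essentially what Section 4 silently relies on when the compactons for $U=(1+\phi_3)^\alpha$, $\frac12\leq\alpha<1$, are declared solutions of the field equation on the boundary of their support merely because they are $C^2$. One minor superfluity: local constancy of $\phi$ on the interior of $M_\phi$ (true, since $N_0$ is discrete) is not needed --- $F_\phi\equiv 0$ on that open set already gives $\dstar\phi^*\omega=-*\d F_\phi=0$ there.
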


\begin{proof} By assumption $F_\phi^2-(U\circ\phi)^2$ is constant on $M$, so
by Lemma \ref{key} we have that $h(\d\phi_x X,\tau(\phi)(x))=0$
for all $x\in M$ and all $X\in T_xM$. It follows that $\tau(\phi)(x)=0$
at all regular points of $\phi$, since $\d\phi_x(T_xM)=T_xN$ at such $x$.
It remains to show that $\phi$ satisfies (\ref{eom}) on its critical
set. So, let $x$ be a critical point of
$\phi$ (meaning $\rank\d\phi_x<2$). 
Then
$\phi^*\omega_x=0$ so $F_\phi(x)=0$, and hence $\phi(x)\in U^{-1}(0)$. 
But $U\geq 0$, so $\phi(x)$ is a minimum of $U$, and hence
$(\grad U)(\phi(x))=0=\d U_{\phi(x)}$ also. 
Hence $(\grad F_\phi)(x)=\sharp(\d F_\phi)_x=\pm\sharp\d U_{\phi(x)}\d\phi_x=0$,
and one sees from equation (\ref{q}) that $\dstar\phi^*\omega=0$ 
at $x$. Hence $\phi$
satisfies (\ref{eom}) at $x$.
\end{proof}

So solutions of the Bogomol'nyi equation automatically satisfy the
field equation, as usual. In a general field theory of
Bogomol'nyi type, there is no reason why
solutions of the field equation should necessarily satisfy the
Bogomol'nyi equation. Remarkably, we will show that, on $M=\R^2$, all
solutions of the field equation satisfy one or other of the
the Bogomol'nyi equations at each point. 

\begin{prop}\label{prop2} Let $\phi:\R^2\ra N$ satisfy the field equation
(\ref{eom}) and boundary condition (\ref{bc}). Then
$$
F_\phi^2=(U\circ\phi)^2
$$
everywhere
\end{prop}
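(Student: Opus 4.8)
The plan is to study the single function $b:=F_\phi^2-(U\circ\phi)^2$ on $\R^2$ --- its identical vanishing is exactly the (squared) Bogomol'nyi equation (\ref{bog}) --- and to show that the field equation forces $b$ to be constant while the boundary condition forces that constant to be zero. For the first half: since $\phi$ satisfies $\tau(\phi)=0$, Lemma~\ref{key} gives, for every vector field $X$ on $\R^2$,
\[
\d b(X)=-2\,h(\d\phi X,\tau(\phi))=0 ,
\]
so $\d b\equiv 0$ as a $1$-form, and since $\R^2$ is connected, $b\equiv c$ for some constant $c\in\R$. It then remains only to prove $c=0$.

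To pin down the sign of $c$, note that by continuity of $U$ and the boundary condition (\ref{bc}) we have $(U\circ\phi)(x)\to U(\phi_0)=0$ as $|x|\to\infty$; since $F_\phi^2(x)=c+(U\circ\phi)^2(x)\ge 0$ for every $x$, letting $|x|\to\infty$ forces $c\ge 0$, so in particular $c<0$ is excluded. To exclude $c>0$: in that case $F_\phi^2=(U\circ\phi)^2+c\ge c>0$ everywhere, so $F_\phi=*\phi^*\omega$, which is continuous because $\phi$ is $C^2$, has no zeros on the connected set $\R^2$ and hence a fixed sign, say $F_\phi\ge\sqrt c$ (the case $F_\phi\le-\sqrt c$ is identical). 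Then
\[
\left|\int_{\R^2}\phi^*\omega\right|=\int_{\R^2}|F_\phi|\,\vol_M\ge\sqrt c\,{\rm Vol}(\R^2)=\infty ,
\]
contradicting the hypothesis, imposed along with the boundary condition (\ref{bc}), that $\int_M\phi^*\omega$ converges (equivalently, $E_4(\phi)=\frac12\int_{\R^2}F_\phi^2\,\vol_M=\infty$, so $\phi$ would not be a finite-energy configuration). Hence $c=0$, i.e.\ $F_\phi^2=(U\circ\phi)^2$ everywhere, as claimed.

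Given Lemma~\ref{key}, the computations above are immediate; the only substantive step is the passage from ``$b$ constant'' to ``$b=0$'', and the one point that needs care there is that convergence of the \emph{signed} integral $\int_M\phi^*\omega$ does not by itself control $\int_M|\phi^*\omega|$ --- this is why one needs either the fixed-sign observation in the case $c>0$ or, alternatively, the assumption that $E_4(\phi)$ is finite. It is also worth noting that both the connectedness and the non-compactness of $\R^2$ are genuinely used: on a compact domain $b$ would still be constant but need not vanish, which is consistent with the Proposition being restricted to $M=\R^2$.
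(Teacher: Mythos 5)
Your proof is correct and follows essentially the same route as the paper: Lemma~\ref{key} plus $\tau(\phi)=0$ forces $F_\phi^2-(U\circ\phi)^2$ to be constant on the connected domain $\R^2$, and the boundary condition then kills the constant. Your handling of the final step is in fact slightly more careful than the paper's (which simply asserts $F_\phi(x)\to 0$ from (\ref{bc})): you only use $(U\circ\phi)\to 0$ to rule out $c<0$, and the fixed-sign/divergent-integral argument to rule out $c>0$, which correctly addresses the point that convergence of the signed integral $\int_M\phi^*\omega$ does not by itself give pointwise decay of $F_\phi$.
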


\begin{proof}
Since $\tau(\phi)=0$, we see from Lemma \ref{key} that
  $F_\phi^2-(U\circ\phi)^2$ is constant
on $\R^2$. But $\phi(x)\ra\phi_0\in U^{-1}(0)$ as $|x|\ra \infty$ by
(\ref{bc}), so $F_\phi(x)\ra 0$ and $U(\phi(x))\ra 0$ as $|x|\ra \infty$.
Hence, this constant is $0$.
\end{proof}

Since its proof uses only Lemma \ref{key}, Proposition \ref{prop2} extends 
immediately to the weaker case that $V=\frac12 U^2$ is $C^1$.
By contrast, the proof of Proposition \ref{prop1} 
makes essential use of the differentiability
of $U$, so does not extend to this weaker case.
By the {\em support} of a map $\phi:M\ra N$ we mean the closure of
$\phi^{-1}(N')$, that is
\beq
\supp\phi={\rm closure}(\{x\in M\: :\: \phi(x)\notin N_0\}).
\eeq
It follows immediately from Proposition \ref{prop2} that, for a solution 
$\phi$ on $M=\R^2$, $\supp\phi={\rm closure}(M\less M_\phi)$.

As we
will see later, it is possible, for suitable $U$,
 to construct solutions to (\ref{eom}) by
gluing together maps with  $\phi^*\omega=*U\circ\phi$ and
$\phi^*\omega=-*U\circ\phi$ in different regions of $M$. So it does not
follow from Proposition \ref{prop2}
that all solutions of the theory are global energy minimizers. In particular,
the vacuum sector can contain infinitely many static solutions, of 
arbitrarily
high energy. Remarkably, we will see that these ``lump-antilump''
superpositions are actually local {\em minima} of $E$, not saddle
points. A key property which we will exploit in the construction
of these exotic multilumps is the invariance of the model under area
preserving diffeomorphisms of $(M,g)$. Once again, this
property has been observed previously in specific cases by many authors
\cite{gis,pie}.

\begin{prop}\label{prop3}
 Let $\phi:M\ra N$ and $\aa:M'\ra M$ be an area preserving
diffeomorphism. Then $E(\phi\circ\aa)=E(\phi)$.
\end{prop}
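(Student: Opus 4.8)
The plan is to show that each of the two summands $E_4$ and $E_0$ is individually invariant under precomposition with an area-preserving diffeomorphism $\aa$. The key point is that both terms are integrals of quantities that, when expressed via the pullback of the area form, transform like top-degree forms on $M$, so the invariance reduces to the change-of-variables formula together with the hypothesis $\aa^*\vol_M=\vol_M$.

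First I would treat $E_0$. We have $E_0(\phi\circ\aa)=\frac12\int_M (U\circ\phi\circ\aa)^2\,\vol_M$. Writing $f=(U\circ\phi)^2:M\ra\R$, this is $\frac12\int_M (f\circ\aa)\,\vol_M=\frac12\int_M \aa^*(f\,\vol_M)$ (using that $\aa^*(f\,\vol_M)=(f\circ\aa)\,\aa^*\vol_M=(f\circ\aa)\,\vol_M$ since $\aa$ is area preserving), which equals $\frac12\int_M f\,\vol_M=E_0(\phi)$ by the diffeomorphism-invariance of the integral of a top form (here one uses that $\aa$ is orientation preserving, which is implicit in ``area preserving'' since $\vol_M$ is a fixed orientation form).

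Next I would treat $E_4$. Here I would use the functoriality of pullback: $(\phi\circ\aa)^*\omega=\aa^*(\phi^*\omega)$. Writing $\phi^*\omega=F_\phi\,\vol_M$ as in the definition preceding the proposition, we get $(\phi\circ\aa)^*\omega=\aa^*(F_\phi\,\vol_M)=(F_\phi\circ\aa)\,\vol_M$, so $F_{\phi\circ\aa}=F_\phi\circ\aa$. Therefore $E_4(\phi\circ\aa)=\frac12\int_M F_{\phi\circ\aa}^2\,\vol_M=\frac12\int_M (F_\phi\circ\aa)^2\,\vol_M=\frac12\int_M \aa^*(F_\phi^2\,\vol_M)=\frac12\int_M F_\phi^2\,\vol_M=E_4(\phi)$, by the same change-of-variables argument as for $E_0$. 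Adding the two gives $E(\phi\circ\aa)=E(\phi)$.

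The only genuine subtlety — and the step I would flag as needing a word of care rather than being a real obstacle — is the invariance of the integral $\int_M\beta$ of a top-degree form $\beta$ under $\aa$: this requires $\aa$ to be orientation preserving and, when $M=\R^2$, requires enough decay that the integrals converge and the change of variables is legitimate on the noncompact domain. The boundary condition (\ref{bc}) and the assumed convergence of $\int_M\phi^*\omega$ handle the latter; orientation preservation is part of what ``area preserving diffeomorphism of the oriented manifold $(M,g)$'' should be taken to mean (an $\aa$ with $\aa^*\vol_M=\vol_M$ rather than $\aa^*\vol_M=\pm\vol_M$), and I would note this explicitly. Everything else is the elementary naturality of pullback plus the change-of-variables theorem.
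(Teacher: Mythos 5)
Your argument is exactly the paper's: both terms are handled by writing the integrand as a pullback of a top-degree form, using $(\phi\circ\aa)^*\omega=\aa^*(\phi^*\omega)=(F_\phi\circ\aa)\,\aa^*\vol_M$ and the change-of-variables formula. The only (harmless) difference is that the paper states the result for a diffeomorphism $\aa:M'\ra M$ between possibly different domains, writing $\aa^*\vol_M=\vol_{M'}$, while you implicitly take $M'=M$; your extra remarks on orientation and convergence are sensible but not part of the paper's proof.
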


\begin{proof} By assumption, $\aa^*\vol_M=\vol_{M'}$. Let $\psi=\phi\circ\aa$.
Then
\bea
\psi^*\omega=\aa^*(\phi^*\omega)=\aa^*(F_\phi\vol_M)=
(F_\phi\circ\aa)\vol_{M'}.
\eea
Hence
\bea
E_4(\psi)=\frac12\int_{M'}(F_\phi\circ\aa)^2\vol_{M'}
=\frac12\int_{M'}\aa^*(F_\phi^2\vol_M)
=\frac12\int_M F_\phi^2\vol_M=E_4(\phi)
\eea
since $\aa:M'\ra M$ is a diffeomorphism. Similarly,
\bea
E_0(\psi)&=&\frac12\int_{M'}(U\circ\phi\circ\aa)^2\vol_{M'}
=\frac12\int_{M'}\aa^*((U\circ\phi)^2\vol_M)\nonumber \\
&=&
\frac12\int_{M}(U\circ\phi)^2\vol_M=E_0(\phi).
\eea
\end{proof}

It follows immediately that $\tau(\phi\circ\aa)=\tau(\phi)\circ\aa$ 
so $\phi\circ\aa$ satisfies the field equation (\ref{eom}) if and only if
$\phi$ does. Since $F_{\phi\circ\aa}=F_\phi\circ\aa$, we also verify immediately
that $\phi\circ\aa$ satisfies the Bogomol'nyi equation (\ref{bog}) if and
only if $\phi$ does.

\section{Semi-compactons}
\news

In this section we restrict attention to the case $M=\R^2$, $N=S^2$ and
\beq\label{U}
U(\phi)=1+\phi_3,
\eeq
 though the constructions below clearly generalize to any
$U$ which is $C^1$, non-negative and has a single non-degenerate zero.
It is straightforward \cite{ada1}
to find a degree $1$ solution of the Bogomol'nyi
equation (\ref{bog}) within the hedgehog ansatz
\beq\label{hhog}
\phi(r,\theta)=(\sqrt{1-z(r)^2}\cos\theta,\sqrt{1-z(r)^2}\sin\theta,z(r))
\eeq
where $(r,\theta)$ are polar coordinates on $\R^2$, $z:[0,\infty)\ra\R$,
$z(0)=1$ and $z(\infty)=-1$. In terms of cylindrical
coordinates $Z=\phi_3$ and $\Theta={\rm arg}(\phi_1+i\phi_2)$, the K\"ahler
form on $S^2$ is $\omega=\d\Theta\wedge\d Z$, and the ansatz (\ref{hhog})
is $\Theta=\theta$, $Z=z(r)$. Hence, the Bogomol'nyi
equation (\ref{bog}) becomes
\beq
-\frac{z'}{r}=1+z
\eeq
whose solution, with the required boundary data, is
\beq\label{lump}
z(r)=-1+2e^{-r^2/2}.
\eeq
Note that this solution has faster than exponential decay, and is
smooth everywhere, including at the origin. To check this, define the
(globally) analytic function
\beq
q(s)=\sum_{n=0}^\infty \frac{(-1)^n}{2^{n+1}(n+1)!}s^{n}
\eeq
and note that
\beq
\sqrt{1-z(r)^2}=2r(1-r^2q(r^2))^2\sqrt{q(r^2)}.
\eeq
Since $q(0)\neq 0$, $\sqrt{q(s)}$ is analytic
in a neighbourhood of
$0$, so
\beq
\sqrt{1-z(r)^2}=rQ(x^2+y^2)
\eeq
where $Q$ is analytic on a neighbourhood of $0$. It follows that
\beq
\phi(x,y)=(xQ(x^2+y^2),yQ(x^2+y^2),-1+2e^{-(x^2+y^2)/2})
\eeq
is smooth at $(0,0)$.
Clearly, $\ip{U}=1$, so this unit lump solution has energy $E=4\pi$. 

One can seek degree $n\geq 2$ solutions within the ansatz (\ref{hhog}) by
replacing $\theta$ with $n\theta$, as in \cite{ada1},
\beq
\phi(r,\theta)=(\sqrt{1-z_n(r)^2}\cos n\theta,\sqrt{1-z_n(r)^2}\sin n\theta,
z_n(r))
\eeq
The profile function is then
$z_n(r)=-1+2\exp(-r^2/2n)$. But such fields are not even once
differentiable at the origin, so are not genuine solutions of
the Bogomol'nyi (or field) equation in the sense that we demand. 
The problem is that $\phi$ has a conical singularity at $(0,0)$. To see
this, let
\beq\label{stereo}
W=\frac{\phi_1+i\phi_2}{1+\phi_3}
\eeq
be the image of $\phi$ under stereographic projection from $(0,0,-1)$.
Note that $W$ is a good complex coordinate on a neighbourhood of
$\phi(0,0)=(0,0,1)$. Then, for this radially symmetric $n$-lump,
\beq
W=\left(\frac{r}{\sqrt{2n}}+O(r^3)\right)e^{in\theta}
=\left(\sqrt{\frac{x^2+y^2}{2n}}+O(r^3)\right)\left(\frac{x+iy}{\sqrt{x^2+y^2}}
\right)^n.
\eeq
Hence
\beq
W_x(0,y)=\sqrt{\frac{n}{2}}\left(\frac{iy}{|y|}\right)^{n-1}+O(y^2)
\eeq
which has a step discontinuity at $y=0$. There is a similar problem
with the radially symmetric degree $n$ solutions obtained in
\cite{pie}.
We will see below that genuine (at least twice
differentiable) solutions of the Bogomol'nyi equation do exist for each 
$n\geq1$ but constructing them requires some ingenuity. 

A geometric
insight into the difficulty one faces can be obtained from Remark \ref{nss}.
In this case, the vacuum manifold is $N_0=\{(0,0,-1)$, so 
$N'=N\less N_0$ is a punctured sphere or, equivalently, an open disk.
The deformed area form on $N'$ is, in cylindrical coordinates,
\beq
\Omega=\frac{\d\Theta\wedge\d Z}{1+Z}
\eeq
which gives $N'$ infinite total area. In fact, $(N',\Omega)$ 
can be visualized as a
``cigar shaped'' surface of revolution,
 with a single infinite cylindrical end replacing the
missing point $N_0$, see figure \ref{cigar}(a). 
This comes from identifying $\Omega$ with the area form on the punctured
sphere $N'$ associated with the metric
\beq
h'=\frac{2\d W\d\bar{W}}{1+|W|^2},
\eeq
where $W$ is the stereographic coordinate defined in (\ref{stereo}).
The degree $1$ energy
minimizer constructed above can now be seen as an area-preserving
diffeomorphism from $\R^2$ to $(N',\Omega)$. The difficulty in 
constructing higher degree solutions is that any map of degree exceeding $1$
must have critical points. Any such critical point must get mapped to $N_0$, 
the end at infinity, and it is hard to 
arrange this while maintaining the area-preserving property of $\phi$ away
from its critical points. Certainly $\phi$ cannot have any isolated
critical points (as a generic map between 2-manifolds does),
since we have the following proposition.

\begin{figure}
\begin{center}
\begin{tabular}{cccc}
\includegraphics[scale=0.45]{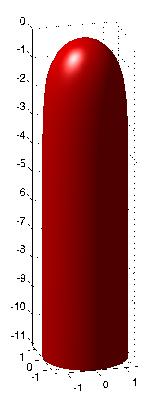}&
\includegraphics[scale=0.45]{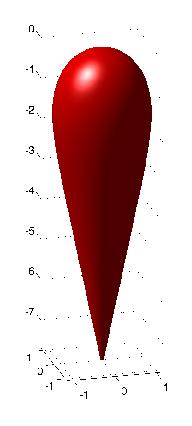}&
\includegraphics[scale=0.45]{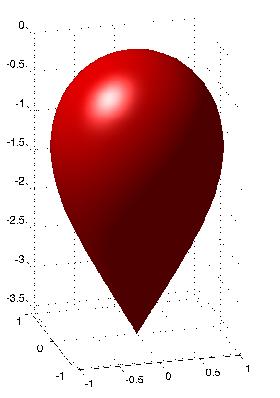}&
\includegraphics[scale=0.45]{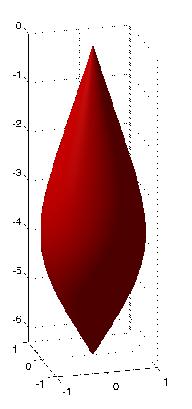}\\
(a)&(b)&(c)&(d)
\end{tabular}
\end{center}
\caption{The deformed target spaces $(N',\Omega)$ embedded as surfaces of
revolution, in the cases (a) $U=1+\phi_3$, (b) $U=(1+\phi_3)^{0.8}$,
(c) $U=(1+\phi_3)^{0.5}$ and (d) $U=(1+\phi_3)^{0.5}(1-\phi_3)^{0.7}$.}
\label{cigar}
\end{figure}

\begin{prop} Let $\phi:\R^2\ra S^2$ be a solution of the model
with $U(\phi)=1+\phi_3$ satisfying boundary condition (\ref{bc}). Then 
every connected component of the critical set of
$\phi$ is unbounded.
\end{prop}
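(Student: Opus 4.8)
\noindent The plan is to make precise the following picture: near a bounded component $C$ of the critical set, $\phi$ is squeezed into a small neighbourhood of the unique vacuum $p=(0,0,-1)$, which --- with respect to the deformed area form $\Omega=\omega/U$ of Remark \ref{nss} --- is an \emph{infinite} cylindrical end; but a bounded neighbourhood of $C$ has only finite area, and $\phi$ preserves area on $M'=\R^2\less M_\phi$, so $\phi$ cannot possibly cover the whole end --- while a connectedness argument will show that it must.

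First I would record that, for any solution $\phi$, $M_\phi=\phi^{-1}(p)$: indeed $\rank\d\phi_x<2$ iff $(\phi^*\omega)_x=0$ (one implication because $\omega$ is nondegenerate), i.e.\ iff $F_\phi(x)=0$, which by Proposition \ref{prop2} ($F_\phi^2=(U\circ\phi)^2$) holds iff $U(\phi(x))=0$, i.e.\ iff $\phi(x)=p$. Hence on $M'$ the map $\phi$ is a local diffeomorphism with $|\phi^*\Omega|=\vol_M$ (Remark \ref{nss} together with Proposition \ref{prop2}). The only quantitative input needed is that the punctured cap $D_t:=\{q\in S^2\::\:0<U(q)<t\}$ has infinite $\Omega$-area for every $t>0$: in the cylindrical coordinates $(\Theta,Z)=({\rm arg}(\phi_1+i\phi_2),\phi_3)$ one has $U=1+Z$, $\omega=\d\Theta\wedge\d Z$, so $\int_{D_t}\Omega=2\pi\int_{-1}^{-1+t}\d Z/(1+Z)=+\infty$.

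Now I would argue by contradiction. Suppose $C$ is a bounded --- hence compact --- connected component of $M_\phi$. Working inside a large closed ball $B$ containing $C$ in its interior, and using that in the compact set $M_\phi\cap B$ a connected component is the intersection of its relatively open-and-closed neighbourhoods, I would produce a bounded open set ${\cal O}$ with $C\subseteq{\cal O}$ and $\overline{\cal O}\cap M_\phi=C$ (possibly after harmlessly enlarging $C$ to a compact piece that is relatively open-and-closed in $M_\phi\cap B$). Then $U\circ\phi$ vanishes on $C$ and, $\partial{\cal O}$ being compact and disjoint from $M_\phi$, is bounded below by some $\eta>0$ there. Fix $t\in(0,\min\{\eta,2\})$ and set $A:=\{x\in{\cal O}:U(\phi(x))<t\}$; this is open, contains $C$ as a proper subset (a nonempty open-and-closed proper subset of $\R^2$ cannot exist), and $\overline A\subseteq{\cal O}$ is compact because $U\circ\phi\geq\eta>t$ on $\partial{\cal O}$. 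Since $A\cap M_\phi=C$, we have $A\less C=A\cap M'$, so $\phi|_{A\less C}$ is a local diffeomorphism, and it takes values in $D_t$. The heart of the proof is that $\phi|_{A\less C}:A\less C\ra D_t$ is \emph{onto}: its image is open (local diffeomorphism) and nonempty; it is also closed in $D_t$, since if $\phi(x_n)\ra q\in D_t$ with $x_n\in A\less C$ then, $\overline A$ being compact, a subsequence converges to some $x_\infty\in\overline A\subseteq{\cal O}$ with $\phi(x_\infty)=q$, whereupon $q\neq p$ forces $x_\infty\notin M_\phi$ and $U(q)<t$ forces $x_\infty\in A$, so $x_\infty\in A\less C$; as $D_t$ is connected, the image is all of $D_t$. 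Writing $n(q)$ for the number of points of $A\less C$ mapping to $q$ (finite, being a discrete closed subset of the compact $\overline A$), the area formula now gives
$$
{\rm Area}(A\less C)=\int_{A\less C}|\phi^*\Omega|=\int_{D_t}n(q)\,\Omega\geq\int_{D_t}\Omega=+\infty,
$$
using $n(q)\geq1$ by surjectivity. But ${\rm Area}(A\less C)\leq{\rm Area}({\cal O})<\infty$, a contradiction. Hence $M_\phi$ has no bounded component.

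I expect the only real friction to be bookkeeping: (i) producing ${\cal O}$ with $\overline{\cal O}\cap M_\phi=C$ when $M_\phi$ is an arbitrary closed set --- this is the standard separation property of connected components in a compact space --- and (ii) quoting the area formula for the surjective but non-proper local diffeomorphism $\phi|_{A\less C}$. The only genuinely substantive step is the surjectivity claim; everything else is routine, and the underlying reason the argument works is simply that all of the (infinite) $\Omega$-area of $(N',\Omega)$ is piled up at the single vacuum $p$, so no finite-area region can map onto a full neighbourhood of it.
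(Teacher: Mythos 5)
Your proof is correct and follows essentially the same route as the paper's: both derive a contradiction from the fact that a bounded neighbourhood of the component has finite Euclidean area yet must map area-preservingly \emph{onto} the punctured cap $\{q:0<U(q)<t\}$, which has infinite $\Omega$-area. The paper's version is terser --- it isolates the component inside a regular level set of $\phi_3\circ\phi$ and simply asserts surjectivity onto $N_\eps$ --- so your extra bookkeeping (the clopen separation of the component and the open-and-closed argument for surjectivity) fills in details rather than changing the argument.
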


\begin{proof} By Proposition \ref{prop2}, $F_\phi^2=U^2$ everywhere and so
$\phi$ maps the critical set
$M_\phi$ into $N_0=U^{-1}(0)$, the vacuum manifold. 
Assume, towards a contradiction, that $M_\phi$ has a bounded
connected component $M_1$. Then for $\eps>0$ sufficiently small
the closed 1-manifold $\phi_3^{-1}(-1+\eps)$ has a connected component
$\Gamma\cong S^1$ whose interior contains $M_1$. Let $S$ be the interior
of $\Gamma$ with $M_1$ removed, and consider the restriction of $\phi$ to $S$.
By Remark \ref{nss} this is an area-preserving surjective map from
$S$ to $N_\eps=\{\phi\: :\: -1<\phi_3<-1+\eps\}$ with respect to $\Omega$.
But $S$, being a bounded subset of $\R^2$, has finite area while 
$(N_\eps,\Omega)$ has inifnite area, a contradiction.
\end{proof}

Nonetheless, this model does have solutions in every homotopy
class.
We construct them as follows. Let $\aa:(0,\infty)\times\R\ra\R^2$ be the
diffeomorphism
\beq
\aa:(x,y)\mapsto (\log x,xy).
\eeq
This map is area-preserving (with respect to the Euclidean metric on 
both spaces). Let $\psi:\R^2\ra S^2$ denote the unit lump
solution constructed above, equations (\ref{hhog}), (\ref{lump}). Then
as remarked after Proposition \ref{prop3}, $\psi\circ\aa$ satisfies the
Bogomol'nyi equation on the half-space $(0,\infty)\times\R$. Clearly,
$\lim_{x\ra0^+}\psi(\aa(x,y))=(0,0,-1)$ for all $y$. Hence the map
\beq\label{half}
\phi:\R^2\ra S^2,\qquad
\phi(x,y)=\left\{\begin{array}{cc}\psi(\aa(x,y))&x>0\\
(0,0,-1)&x\leq0\end{array}\right.
\eeq
is continuous and satisfies the Bogomol'nyi equation
away from the line $x=0$. We claim that this is a genuine degree $1$
solution of the Bogomol'nyi equation, and hence, the field equation.
This amounts to the claim that $\phi$ is twice continuously differentiable
everywhere.

\begin{prop} The mapping $\phi:\R^2\ra S^2$ defined in equation
(\ref{half}) is $C^2$.
\end{prop}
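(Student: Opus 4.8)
The map $\phi$ is manifestly $C^\infty$ on the open half-plane $\{x>0\}$ (composition of the smooth lump $\psi$ with the smooth diffeomorphism $\aa$) and is constant, hence $C^\infty$, on the open half-plane $\{x<0\}$. So the whole issue is regularity across the line $\{x=0\}$, and by translation invariance in $y$ it suffices to understand the behaviour of $\phi$ and its first two derivatives as $x\ra 0^+$, uniformly in $y$ on bounded sets, and to check they match the constant value $(0,0,-1)$ and its (vanishing) derivatives. The natural coordinate in which to do this is the stereographic coordinate $W=(\phi_1+i\phi_2)/(1+\phi_3)$ of (\ref{stereo}), since $W$ is a genuine smooth chart on a neighbourhood of $(0,0,-1)$... wait, no: $W\ra\infty$ there. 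The right choice is the \emph{conjugate} stereographic coordinate $\wt W=(\phi_1-i\phi_2)/(1-\phi_3)$ (projection from $(0,0,1)$), which is a smooth complex chart vanishing at $(0,0,-1)$; so $\phi$ is $C^2$ near a point of $\{x=0\}$ precisely when $\wt W\circ\phi$ extends to a $C^2$ complex-valued function there with $\wt W=0$ on $\{x\le 0\}$.

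First I would compute $\wt W\circ\psi$ explicitly from (\ref{hhog}), (\ref{lump}): with $z(r)=-1+2e^{-r^2/2}$ one gets $1-z=2(1-e^{-r^2/2})$ and $\sqrt{1-z^2}=2e^{-r^2/4}\sqrt{1-e^{-r^2/2}}$, so
\beq
\wt W\circ\psi=\frac{\sqrt{1-z(r)^2}}{1-z(r)}\,e^{-i\theta}
=\frac{e^{-r^2/4}}{\sqrt{1-e^{-r^2/2}}}\,e^{-i\theta}
=\frac{x-iy}{\sqrt{x^2+y^2}}\cdot\frac{e^{-(x^2+y^2)/4}}{\sqrt{1-e^{-(x^2+y^2)/2}}}.
\eeq
Writing $s=x^2+y^2$, the scalar factor is $e^{-s/4}/\sqrt{1-e^{-s/2}}$, which behaves like $\sqrt{2/s}\,(1+O(s))$ as $s\ra 0$ and is exponentially small as $s\ra\infty$; crucially $\sqrt{s}\cdot(\text{scalar factor})=\sqrt{s}\,e^{-s/4}/\sqrt{1-e^{-s/2}}$ is a smooth, nowhere-zero function of $s$ on $[0,\infty)$ (the denominator $(1-e^{-s/2})/s$ extends smoothly and positively to $s=0$). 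Hence $\wt W\circ\psi=(x-iy)\,G(x^2+y^2)$ for a function $G$ that is smooth and positive on $[0,\infty)$ with exponential decay at infinity — this is just the statement, dual to the one already made in the excerpt for $\sqrt{1-z^2}$, that the lump is smooth at the origin, now phrased near the antipode.

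Next I substitute $\aa(x,y)=(\log x,xy)$. Then
\beq
\wt W\circ\phi(x,y)=\big(\log x-ixy\big)\,G\big((\log x)^2+x^2y^2\big)\qquad(x>0),
\eeq
and I must show this, together with the prescription $\wt W\circ\phi\equiv 0$ for $x\le 0$, defines a $C^2$ function of $(x,y)$ on $\R^2$. As $x\ra 0^+$ with $y$ bounded, $(\log x)^2+x^2y^2\ra\infty$, so the decisive feature is the exponential decay of $G$: $G(t)=O(e^{-t/4})$ and all its derivatives likewise, so $G((\log x)^2+\cdots)=O(e^{-(\log x)^2/4})=O(x^{-\frac14\log x})$, which (being $x^{-\log x}$-type, cf.\ the abstract) decays to $0$ faster than any power of $x$. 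The plan is therefore: (i) expand $\wt W\circ\phi$ and its partial derivatives $\partial_x,\partial_y,\partial_x^2,\partial_x\partial_y,\partial_y^2$ by the chain and product rules — each is a finite sum of terms of the shape (polynomial in $x,y,\log x,1/x$) $\times$ (derivative of $G$ evaluated at $(\log x)^2+x^2y^2$); (ii) on bounded $y$, bound the polynomial-in-$\log x$-and-$1/x$ factors crudely by $x^{-C}$ for a fixed $C$, and bound the $G$-factor by $C'e^{-(\log x)^2/4}$; (iii) note $x^{-C}e^{-(\log x)^2/4}\ra 0$ as $x\ra 0^+$, uniformly on compact $y$-sets, since the exponent is $-\frac14(\log x)^2 - C\log x = -\frac14(\log x + 2C)^2 + C^2 \ra -\infty$. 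Hence every partial derivative up to order two extends continuously across $\{x=0\}$ by the value $0$, matching the derivatives of the constant map on $\{x<0\}$; a standard one-variable lemma (a function $C^2$ on each side of a line, with matching one-sided partials up to order two that agree along the line, is $C^2$) finishes it. Finally I would remark that $\wt W=0$ corresponds exactly to $\phi=(0,0,-1)$ and that being $C^2$ in the chart $\wt W$ is equivalent to being $C^2$ as a map into $S^2$ near that point.

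**Main obstacle.** The genuinely delicate point is (ii)–(iii): one must be sure that differentiating brings down only \emph{polynomially} many factors of $1/x$ and $\log x$ (it does, because $\partial_x\log x=1/x$, $\partial_x(1/x)=-1/x^2$, etc., stay within the algebra of such terms, and $G$'s argument $(\log x)^2+x^2y^2$ has $x$-derivative $2\log x/x + 2xy^2$ — again in that algebra), so that the super-exponential smallness $e^{-(\log x)^2/4}$ always wins. The bookkeeping of which powers of $1/x$ appear after two derivatives is routine but must be done carefully; there is no analytic subtlety beyond the elementary fact that $e^{-(\log x)^2/4}$ beats every power of $x$. (A slicker packaging: set $u=-\log x\in\R$, so $x=e^{-u}$, $1/x=e^{u}$, and $\wt W\circ\phi=(-u-ie^{-u}y)G(u^2+e^{-2u}y^2)$; as $x\ra 0^+$, $u\ra+\infty$, and one is asking for decay of this and its $x$-derivatives — but $\partial_x=-e^{u}\partial_u$ reintroduces the $e^u=1/x$ growth, so the estimate is the same. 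I would present whichever form makes the two-derivative computation most transparent.)
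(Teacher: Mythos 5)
Your proof is correct and follows essentially the same route as the paper's: reduce the problem to showing that all first and second partial derivatives tend to zero as $x\ra 0^+$ with matching (zero) values from the left, then let the Gaussian decay of the lump (packaged here as the decay of $G$ and its derivatives at infinity) beat the polynomial factors in $1/x$, $\log x$, $y$ produced by differentiating $\aa$ -- working in the stereographic chart $\wt W$ rather than with the $\R^3$-valued components of $\psi$ is only a cosmetic difference. One small inaccuracy in a side remark: $G(s)\sim\sqrt{2}/s$ as $s\ra 0^+$, so $G$ is not smooth on $[0,\infty)$ (the chart $\wt W$ blows up at $\psi(0,0)=(0,0,1)$, not at the vacuum); this is harmless because your argument only ever evaluates $G$ at large arguments near the gluing line $x=0$.
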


\begin{proof}
 Clearly, $\phi$ is smooth away from the line $x=0$, and
all its derivatives vanish identically for $x<0$. So it suffices to
show that
\beq
|\phi_x|,|\phi_y|,|\phi_{xx}|,|\phi_{xy}|,|\phi_{yy}|
\eeq
all vanish in the limit $x\ra 0^+$, for all $y$. 
For $x>0$ we have that $\phi(x,y)=\psi(X,Y)$ where $X=\log x$, $Y=xy$.
Straightforward
estimates using the explicit formulae (\ref{hhog}),(\ref{lump})
yield that there exist constants $C,R_*>0$ such that
for all $R=\sqrt{X^2+Y^2}\geq R_*$,
\bea
|\psi_X|,|\psi_Y|&\leq&CRe^{-R^2/4}\nonumber \\
|\psi_{XX}|,|\psi_{XY}|,|\psi_{YY}|&\leq&CR^2e^{-R^2/4}.\label{est1}
\eea
Similarly, there exists constant $X_*<0$ such that for all $0<x<e^{X_*}$,
\bea
|X_x|,|X_y|,|Y_x|,|Y_y|&\leq&|y|+e^{-X}\nonumber\\
|X_{xx}|,|X_{xy}|,|X_{yy}|,|Y_{xx}|,|Y_{xy}|,|Y_{yy}|&\leq&e^{-2X}.\label{est2}
\eea
Hence, by the chain rule, for all $0<x<x_*=\min\{e^{X_*},e^{-R_*}\}$ and
all
$y$
\bea
|\phi_x|&\leq& CRe^{-R^2/4}(|y|+e^{-X})
\leq C(|X|+ye^{X})e^{-X^2/4}(y+e^{-X})\ra 0
\eea
as $x\ra 0^+$, since then $X\ra -\infty$. Hence $\lim_{x\ra 0^+}|\phi_x(x,y)|
=0$ for all $y$. The same argument deals with $\phi_y$.

Turning to the second derivatives, we see from the chain rule and estimates
(\ref{est1}), (\ref{est2}) that for all $0<x<x_*$ and
all
$y$
\bea
|\phi_{xx}|&\leq&C\left\{Re^{-R^2/4}e^{-2X}+R^2e^{-R^2/4}(y^2+e^{-2X})\right\}\nonumber
\\
&\leq& C(|X|+y+X^2+y^2)(y^2+e^{-2X})e^{-X^2/4}\ra 0
\eea
as $x\ra 0^+$, since then $X\ra -\infty$. Hence $\lim_{x\ra 0^+}|\phi_{xx}(x,y)|
=0$ for all $y$. The same argument deals with $\phi_{xy},\phi_{yy}$.
\end{proof}

It seems likely that the mapping $\phi$ defined in (\ref{half}) is
actually smooth everywhere, but we have not proved this. Let us henceforth
denote this degree 1 $C^2$ map, which satisfies the Bogomol'nyi equation 
everywhere, $\phi_+$. Note that
$E(\phi_+)=4\pi$, the topological minimum value in its homotopy class.
Since it takes exactly the vacuum value on the
left half-plane, one could call this solution a {\em semi-compacton}.
However, by exploiting the invariance of $E$ under area-preserving
diffeomorphisms further, we can construct degree 1 energy minimizers
with more tightly localized support.

 Consider the map
\beq
\aa':(0,\infty)\times(-\frac\pi2,\frac\pi2)\ra (0,\infty)\times\R,\qquad
\aa'(x,y)=
(x\cos^2y,\tan y).
\eeq
Clearly $\aa'$ is an area-preserving diffeomorphism. For any $\eps>0$, denote
by $\phi_+^\eps$ the $x$-translate of $\phi_+$ by $\eps$, that is,
\beq
\phi_+^\eps(x,y)=\phi_+(x-\eps,y).
\eeq
The support of $\phi_+^\eps$ is the half plane $x\geq\eps$. 
Denote by $S$ the infinite half-strip
$S=(0,\infty)\times(-\frac\pi2,\frac\pi2)$, and consider the
mapping
\beq
\phi_\sqsubset:\R^2\ra S^2,\qquad
\phi_\sqsubset(x,y)=\left\{\begin{array}{cc}
\phi_+^\eps(\aa'(x,y))&(x,y)\in S\\
(0,0,-1)&(x,y)\notin S.\end{array}\right.
\eeq
By construction, this is continuous everywhere and $C^2$ on the complement
of $\cd S$, the boundary of the strip $S$. It also satisfies the Bogomol'nyi
equation on $\R^2\less\cd S$. By construction, its support is a subset of the
closure of $S$. In fact, 
\beq
\supp\phi_\sqsubset=\{(x,y)\: :\: x\geq \eps/\cos^2y\}\subset S.
\eeq
Hence $\phi_\sqsubset$ 
is constant on a neighbourhood of $\cd S$, and so is trivially
$C^2$ on $\cd S$. Hence $\phi_\sqsubset$ is $C^2$ everywhere. By construction,
$E(\phi_\sqsubset)=4\pi$, that is, $\phi_\sqsubset$ is a degree 1 energy
minimizer, which we call a {\em semi-compacton}. It has a single energy
density maximum located at the point $(1+\eps,0)$. The energy density
along the line $y=0$ is
\beq
\ee_\sqsubset(x,0)=\left\{\begin{array}{cc}4(x-\eps)^{-\log(x-\eps)}
& x>\eps\\
0& x\leq\eps.\end{array}\right.
\eeq
So $\phi_\sqsubset$ has an energy tail which decays along the strip $S$
like $x^{-\log x}$, faster than any power, but slower than exponential.
The energy density of $\phi_\sqsubset$ (for $\eps$ very small) is
plotted in figure \ref{tail}. 

\begin{figure}
\begin{center}
\includegraphics[scale=0.5]{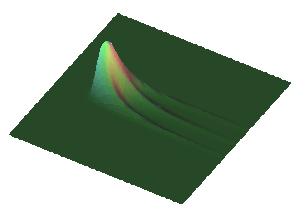}
\end{center}
\caption{The energy density of a semi-compacton.}
\label{tail}
\end{figure}

 By precomposing $\phi_\sqsubset$ with an area preserving 
diffeomorphism
\beq
\aa:\R^2\ra\R^2, \qquad
\aa(x,y)=(\alpha x, \alpha^{-1}y)
\eeq
where $\alpha>0$, we can construct semi-compactons with support in an
arbitrarily thin, half infinite strip. Similarly, the strip can be
deformed to follow any non-self-intersecting half infinite curve which 
escapes to infinity. The mapping
$\bar\phi_{\sqsubset}(x,y)=\phi_\sqsubset(x,-y)$ is an anti-semi-compacton, of 
degree
$-1$. By gluing together (anti-)semi-compactons with disjoint support,
one obtains $C^2$ energy minimizers in every homotopy class.
Gluing together $n_+>0$ semi-compactons and $n_->0$ anti-semicompactons
yields degree $n=n_+-n_-$ fields which, by Proposition \ref{prop1},
are $C^2$ solutions of the field equation, but have energy
$4(n_++n_-)\pi>4|n|\pi$. So each homotopy class contains critical
points of $E$ of arbitrarily high energy. Even more surprising,
these critical points are {\em not} saddle points of $E$ but are, in
a certain sense, linearly stable. 

To see this, one must construct the Hessian operator for
the functional $E(\phi)$ based at a critical point $\phi$. We recall that
this is defined as follows. Let $\phi_{s,t}$ be a two-parameter
variation of a critical point $\phi:M\ra N$ of $E$, and
let $X=\cd_s\phi_{s,t}|_{s=t=0}$, 
$Y=\cd_t\phi_{s,t}|_{s=t=0}\in\Gamma(\phi^{-1}TN)$
be the associated inifnitesimal variations. Then the Hessian of $E$
at $\phi$ is the symmetric bilinear form
\beq
{\rm Hess}(X,Y)=\left.\frac{\cd^2 \:}{\cd s\cd t}E(\phi_{s,t})\right|_{s=t=0}
\eeq
on $\Gamma(\phi^{-1}TN)$. The associated Hessian operator is the 
self-adjoint
linear differential operator $\hh:\Gamma(\phi^{-1}TN)\ra\Gamma(\phi^{-1}TN)$
defined such that
\beq
{\rm Hess}(X,Y)=\int_M h(X,\hh Y)\vol_M=\ip{X,\hh Y}.
\eeq
One uses the spectrum of $\hh$ to classify the critical point $\phi$.
In particular, if $\hh$ has both negative and positive eigenvalues,
$\phi$ is a saddle point. If the quadratic form ${\rm Hess}(X,X)$ is 
non-negative, one says that $\phi$ is linearly stable (although
$\phi$ may actually be dynamically unstable; e.g.\ $0$ is a
linearly stable critical point of $f(x)=-x^4$). 

For the energy under consideration here, one finds that \cite{spesve1}
\beq
\hh Y=-J(\nabla^\phi_{Z_\phi}Y+\d\phi(\sharp\dstar\d\phi^*\iota_Y\omega))
+(\nabla^N_Y\grad V)\circ\phi
\eeq
where $Z_\phi=\sharp\dstar\phi^*\omega\in\Gamma(TM)$,  $\nabla^N$ is the
Levi-Civita connexion on $TN$, $\nabla^\phi$
is its pullback to $\phi^{-1}TN$, $V=\frac12 U^2$ and $\iota$ denotes interior
product ($\iota_A\omega=\omega(A,\cdot)$). The exact details of this formula
are not important. We will need only the following Lemma.

\begin{lemma}\label{freaky}
Let $\phi:M\ra N$ be a critical point of $E$, $\hh$ be its Hessian
operator and $x\in (M\less\supp\phi)$. Then, for all $Y\in\Gamma(\phi^{-1}TN)$,
$$
(\hh Y)(x)=0.
$$
That is, the Hessian operator vanishes identically off the support of $\phi$.
\end{lemma}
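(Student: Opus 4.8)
The plan is to exploit the fact that, off its support, $\phi$ is forced to be \emph{locally constant}, and then to observe that every term in the formula for $\hh Y$ carries a factor that vanishes there. Recall first that $\supp\phi$ is closed (being a closure), so $M\less\supp\phi$ is open, and by definition of the support $\phi$ maps $M\less\supp\phi$ into $N_0=U^{-1}(0)$. Since $N_0$ is discrete and each connected component of $M\less\supp\phi$ is connected, $\phi$ is constant on each such component. Hence $\d\phi$ vanishes identically on a neighbourhood of the given point $x$, and $p:=\phi(x)\in N_0$.

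Next I would run through the three terms of $\hh Y=-J\nabla^\phi_{Z_\phi}Y-J\,\d\phi(\sharp\dstar\d\phi^*\iota_Y\omega)+(\nabla^N_Y\grad V)\circ\phi$ at the point $x$. The middle term is immediate: it has the shape $-J\,\d\phi_x(v)$ with $v=(\sharp\dstar\d\phi^*\iota_Y\omega)(x)\in T_xM$, and $\d\phi_x=0$, so it vanishes. For the first term, recall from equation~(\ref{q}) that $Z_\phi=\sharp\dstar\phi^*\omega=-J_M\grad F_\phi$; since $\phi$ is constant near $x$ we have $\phi^*\omega\equiv0$ there, hence $F_\phi\equiv0$ near $x$, hence $Z_\phi(x)=0$, and the covariant derivative $\nabla^\phi_{Z_\phi}Y$, being tensorial (pointwise $C^\infty$-linear) in its direction argument, vanishes at $x$.

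The only slightly delicate point is the potential term $(\nabla^N_Y\grad V)\circ\phi$ evaluated at $x$, which equals $\nabla^N_{Y(x)}\grad V$ at $p\in N_0$; I must check that this Hessian of $V$ — not merely $\grad V$ — vanishes at $p$. Here one uses that $p$ is a zero of the non-negative $C^1$ function $U$ on the closed surface $N$, hence a (global, interior) minimum, so $\d U_p=0$. Since $\grad V=U\grad U$, in normal coordinates centred at $p$ one has $|\grad V|=U\,|\grad U|=o(\mathrm{dist}_p)\cdot O(1)$, i.e.\ $\grad V$ vanishes to higher than first order at $p$; equivalently $\nabla^N_W\grad V|_p=(\d U_p(W))\,\grad U(p)+U(p)\,\nabla^N_W\grad U=0$ for every $W\in T_pN$. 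Combining the three observations gives $(\hh Y)(x)=0$. I expect this potential term to be the main thing to get right, since one has to notice that the vanishing of the meson mass — the Hessian of $V$ — at the vacua is exactly what the structure $V=\frac12U^2$ with $U\geq0$ guarantees; the other two terms fall out at once from local constancy of $\phi$ near $x$.
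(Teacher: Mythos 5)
Your proof is correct and follows essentially the same route as the paper's: local constancy of $\phi$ off its support kills the two derivative terms, and the product-rule expansion $\nabla^N_Y(U\grad U)=h(Y,\grad U)\grad U+U\nabla^N_Y\grad U$ together with $U(p)=0$ and $\d U_p=0$ (since $p$ is a minimum of the non-negative $U$) kills the zeroth-order term. The only differences are cosmetic — you spell out that $N_0$ is discrete to justify local constancy and that $\nabla^\phi$ is tensorial in its direction slot, both of which the paper leaves implicit.
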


\begin{proof} The complement of $\supp\phi$ is open by definition, so
$\phi$ is constant on a neighbourhood of $x$. It follows that
$Z_\phi=0$ and $\d\phi=0$ on a neighbourhood of $x$, so the first two terms in
$\hh Y$ vanish at $x$ for all $Y$. Consider now the zeroth order piece
\beq
\hh_0Y=(\nabla^N_Y\grad V)\circ\phi.
\eeq
Since $V=\frac12U^2$,
\beq
\hh_0Y=(\nabla^N_Y(U\grad U))\circ\phi
=(h(Y,\grad U)\grad U)\circ\phi+(U\nabla^N_Y\grad U)\circ\phi.
\eeq
But $U(\phi(x))=0$ (since $x\notin\supp\phi$) and, since $U$ is assumed 
non-negative, $\phi(x)$ is a minimum of $U$, and hence $(\grad U)(\phi(x))=0$
also. Hence, for all $Y$, $(\hh_0 Y)(x)=0$
\end{proof}

\begin{prop}
Let $\phi:M\ra N$ be any solution of (\ref{eom}) constructed by
superposing (anti-)\newline semi-compactons $\phi_i$ with support in disjoint
strips $S_i$, $i=1,2,\ldots,m$.  Then for all $Y\in\Gamma^{-1}(TN)$,
$$
{\rm Hess}(Y,Y)\geq 0.
$$
\end{prop}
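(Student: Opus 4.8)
The plan is to combine two facts about this superposed solution $\phi$: first, that the Hessian operator $\hh$ of $E$ is a \emph{local} second-order differential operator in its base map which, by Lemma~\ref{freaky}, vanishes identically off the support of that map; and second, that each constituent (anti-)semi-compacton $\phi_i$ is itself a \emph{global} minimiser of $E$ in its homotopy class, so that its own Hessian form is positive semi-definite. Together these should show that ${\rm Hess}_\phi$ decouples as a sum of the Hessians of the individual $\phi_i$, each term of which is non-negative.

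First I would write ${\rm Hess}_\phi(Y,Y)=\ip{Y,\hh Y}=\int_Mh(Y,\hh Y)\vol_M$. By construction $\phi$ coincides with the suitably placed semi-compacton $\phi_i$ on the open strip $S_i$ and with the vacuum value elsewhere, and the $S_i$ are pairwise disjoint, so $\supp\phi$ is the disjoint union of the sets $\supp\phi_i$, each a closed subset of the open set $S_i$. Applying Lemma~\ref{freaky} to $\phi$ kills the integrand off $\supp\phi$, so
$$
{\rm Hess}_\phi(Y,Y)=\sum_{i=1}^m\int_{\supp\phi_i}h(Y,\hh Y)\vol_M ,
$$
each summand being finite as a sub-integral of the (convergent) full Hessian. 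For each $i$ I would choose $\chi_i\in C^\infty(M)$ with $\supp\chi_i\subset S_i$ and $\chi_i\equiv1$ on an open neighbourhood $W_i$ of $\supp\phi_i$ --- possible exactly because $\supp\phi_i\subset S_i$ and $S_i$ is open. On $W_i$ one has $\phi=\phi_i$, so $Y$ restricts there to a section of $\phi_i^{-1}TN$, and $Y_i:=\chi_iY$ (extended by $0$) is a globally defined smooth section of $\phi_i^{-1}TN$. Locality of $\hh$ together with $\phi=\phi_i$, $Y=Y_i$ on $W_i$ gives $\hh_{\phi_i}Y_i=\hh_\phi Y$ throughout $\supp\phi_i$; and since $\phi_i$ solves the Bogomol'nyi equation, hence by Proposition~\ref{prop1} the field equation, Lemma~\ref{freaky} applied to $\phi_i$ gives $\hh_{\phi_i}Y_i\equiv0$ off $\supp\phi_i$. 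Hence the $i$-th summand equals $\int_Mh(Y_i,\hh_{\phi_i}Y_i)\vol_M={\rm Hess}_{\phi_i}(Y_i,Y_i)$, and so
$$
{\rm Hess}_\phi(Y,Y)=\sum_{i=1}^m{\rm Hess}_{\phi_i}(Y_i,Y_i).
$$

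Finally I would invoke the topological lower energy bound: each $\phi_i$ solves a Bogomol'nyi equation, so it saturates the bound, and since $\ip{U}=1$ one has $E(\phi_i)=4\pi|n_i|$, the minimum of $E$ over its homotopy class (the analogous statement with degree $-1$ holding for each anti-semi-compacton). Thus $\phi_i$ is a global minimiser of $E$, its Hessian form ${\rm Hess}_{\phi_i}$ is consequently positive semi-definite, and in particular ${\rm Hess}_{\phi_i}(Y_i,Y_i)\geq0$; summing over $i$ gives ${\rm Hess}_\phi(Y,Y)\geq0$. The step I expect to be the main obstacle is making the decoupling ${\rm Hess}_\phi=\sum_i{\rm Hess}_{\phi_i}$ airtight: one must check that the cut-off $Y_i$ really is a smooth section of $\phi_i^{-1}TN$ --- this is precisely where openness of $S_i$, together with $\supp\phi_i\subset S_i$, is used, so that $\phi$ and $\phi_i$ agree on a full \emph{neighbourhood} of $\supp\phi_i$, leaving room for the second-order operator --- and one must apply Lemma~\ref{freaky} correctly to \emph{both} $\phi$ and each $\phi_i$ in order to discard all contributions away from the mutually disjoint supports. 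The remaining ingredients, the positivity of the Hessian at a global minimum and the finiteness of the integrals, are routine.
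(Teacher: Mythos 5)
Your proof is correct and follows essentially the same route as the paper's: Lemma \ref{freaky} localizes the Hessian to the disjoint strips, each summand is identified with the Hessian form of the corresponding constituent $\phi_i$, and non-negativity follows because each $\phi_i$ is a global minimizer of $E$ in its homotopy class. Your cut-off construction of $Y_i$ merely makes explicit a point the paper leaves implicit, namely that $Y$ must be converted into a genuine global section of $\phi_i^{-1}TN$ before minimality of $\phi_i$ can be invoked.
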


\begin{proof} By Lemma \ref{freaky},
\beq
{\rm Hess}(Y,Y)=\sum_{i=1}^m\int_{S_i}h(Y,\hh_i Y)\vol_M
\eeq
where $\hh_i$ denotes the Hessian operator associated to the 
(anti-)semi-compacton $\phi_i$. Each term in this sum is non-negative for
all $Y$. For if not, then, by Lemma \ref{freaky},
 there exists $i$ and a section $Y$ such that
\beq
\ip{Y,\hh_iY}=\int_Mh(Y,\hh_i Y)\vol_M=\int_{S_i}h(Y,\hh_i Y)\vol_M<0,
\eeq
which contradicts the fact that $\phi_i$ minimizes $E$ in its homotopy 
class.
\end{proof}

 Physically, the point is that semi-compactons exert no forces
on one another, so $\phi_{\sqsubset}\bar\phi_\sqsubset$ superpositions
are (marginally)
stable, by stability of their constituent parts.

So this model supports degree $n$ (marginally) stable multi-semi-compactons of
energy $4(|n|+2k)\pi$ for all $n\in \Z$ and $k\in\Z_{\geq 0}$. All these
solutions have (multiple) tails  escaping to infinity, along which the
energy
density decays like $x^{-\log x}$. 

\section{Compactons revisited}
\news

When will an extreme baby Skyrme model support genuine compactons? 
The geometric picture outlined above immediately gives
a necessary condition on $U$, namely that $N'=N\less N_0$ (the target space
with its vacua removed) should have finite volume with respect to the
deformed area form $\Omega=\omega/U$. 

\begin{prop} Let $\phi:\R^2\ra N$ be a surjective solution of $\tau(\phi)=0$
of compact support. Then $(N',\Omega)$ has finite volume.
\end{prop}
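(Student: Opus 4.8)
The key fact is Remark~\ref{nss}: a Bogomol'nyi solution restricts to an area-preserving map from $(M',\vol_M)$ to $(N',\Omega)$ where $M'=M\less M_\phi$. The plan is to combine this with Proposition~\ref{prop2}, which tells us that any solution of $\tau(\phi)=0$ on $\R^2$ satisfies one of the Bogomol'nyi equations $\phi^*\omega=\pm * U\circ\phi$ pointwise, and then compare areas. First I would observe that $\supp\phi={\rm closure}(\R^2\less M_\phi)$ (noted after Proposition~\ref{prop2}), so that $M'\cap\supp\phi$ has finite area in $(\R^2,\vol_{\R^2})$ because $\supp\phi$ is compact by hypothesis. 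On this set $\phi$ is area-preserving onto its image in $(N',\Omega)$, so the image has finite $\Omega$-area.

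The remaining point is surjectivity: I must show the image of $M'$ under $\phi$ is all of $N'$ (up to a set of $\Omega$-measure zero), so that finiteness of the image area gives finiteness of $\vol_\Omega(N')$. Here I would use that $\phi$ is surjective onto $N$ by assumption, hence onto $N'$; and the points of $N'$ not in $\phi(M')=\phi(\R^2\less M_\phi)$ must all lie in $\phi(M_\phi)$. But by Proposition~\ref{prop2}, $\phi$ sends $M_\phi$ into $N_0$, which is disjoint from $N'$. Therefore $N'\subseteq\phi(M')$, i.e. $\phi|_{M'}$ is surjective onto $N'$.

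Putting this together: $\vol_\Omega(N')=\vol_\Omega(\phi(M'))$, and since $\phi|_{M'}$ is area-preserving we have (for the regions where the relevant Bogomol'nyi equation holds with a fixed sign) $\vol_\Omega(\phi(M'\cap\supp\phi))\le\vol_{\R^2}(M'\cap\supp\phi)<\infty$, using that $\supp\phi$ is compact. One small care: a priori different sign choices of the Bogomol'nyi equation could hold on different regions, so I would partition $M'$ into the (open) sets where $\phi^*\omega=+{*}U\circ\phi$ and where $\phi^*\omega=-{*}U\circ\phi$; on each, $\phi$ is locally area-preserving with respect to $\Omega$, and the images still cover $N'$, so the total $\Omega$-area of the image is bounded by the finite Euclidean area of $\supp\phi$.

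The main obstacle is the bookkeeping between "area-preserving" (a local, diffeomorphism-type statement about regular points) and the global area inequality $\vol_\Omega(\phi(A))\le\vol_{\R^2}(A)$ for a set $A$ on which $\phi$ may be far from injective. This is handled by the area (co-area) formula: $\vol_{\R^2}(A)=\int_A |\det\d\phi|\,\vol_{\R^2}=\int_A \phi^*\Omega \ge \int_{\phi(A)}\vol_\Omega = \vol_\Omega(\phi(A))$, where the pointwise identity $\phi^*\Omega=\vol_{\R^2}$ on $A\subseteq M'$ is exactly the content of Remark~\ref{nss}, and the inequality is because the multiplicity of $\phi$ over $\phi(A)$ is at least one. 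I expect this to be the only genuinely non-formal step; everything else is a direct assembly of the propositions already proved.
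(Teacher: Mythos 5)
Your proof is correct and follows essentially the same route as the paper's: invoke Proposition \ref{prop2} to get the pointwise Bogomol'nyi equation, note that $\phi$ maps $M_\phi$ into $N_0$ so that surjectivity forces $\phi(M\less M_\phi)\supseteq N'$, and then bound $\vol_\Omega(N')$ by the finite Euclidean area of $M\less M_\phi\subset\supp\phi$ using the area-preserving property. The paper states this more tersely (working componentwise on $M\less M_\phi$, where the sign of $F_\phi$ is automatically constant); your explicit handling of the sign ambiguity and of the multiplicity via the area formula fills in details the paper leaves implicit, but it is the same argument.
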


\begin{proof}
By Proposition \ref{prop2}, $F_\phi^2=(U\circ\phi)^2$ everywhere. Since 
$F_\phi(x)=0$ if and only if $x\in M_\phi$, $\phi$ defines an area
preserving map from each connected component of 
$M\less M_\phi$ into $(N',\Omega)$. The union of the ranges of each such map
is all $N'$ (since $\phi$ is surjective), and hence the area of $N'$
cannot exceed the area of $M\less M_\phi\subset\supp\phi$.
\end{proof}

Conversely, if $(N',\Omega)$ has finite area $A$, let $M'$ be any subset
of $M=\R^2$ of area $A$ which is diffeomorphic to $N'$. For example, if 
$N_0$ consists of $p$ vacua, one could take $M'$ to be an open disk of
area $A+p\eps$ with $p$ small disjoint closed disks of area $\eps$ removed.
Construct an area-preserving diffeomorphism $\psi:M'\less N'$, using the
method of Moser, for example \cite{mos}, and extend $\psi$ to the whole
of $M$ by a piecewise constant map on $M\less M'$. This map $\phi$
certainly has
compact support, and satisfies the field equation except, perhaps, on the
boundary of $M'$. Hence $\phi$ is a genuine solution if and only if it
is $C^2$.

For example, consider the model with
\beq
U(\phi)=(1+\phi_3)^\alpha
\eeq
where $\frac12\leq\alpha<1$. Here $N'=S^2\less\{(0,0,-1)\}$, diffeomorphic to an
open disk, but, unlike the
case $\alpha=1$ considered in the previous section, $N'$ now
has finite volume,
\beq
{\rm Vol}(N')=\int_{N'}\frac{\omega}{U}=
\int_{N'}\frac{\d\Theta\wedge\d Z}{(1+Z)^\alpha}
=\frac{2^{2-\alpha}}{1-\alpha}\pi.
\eeq
It can be visualized as a baloon shaped surface of revolution, with a conical
singularity at the missing vacuum point, see figure \ref{cigar}(b),(c).
An obvious choice for the open set $M'$ is the disk of radius $R=2^{1-\alpha/2}
(1-\alpha)^{-\frac12}$. There is an area-preserving diffeomorphism $M'\ra
N'$ within the radial ansatz (\ref{hhog}),
\beq
z(r)=\left[2^{1-\alpha}-\frac12(1-\alpha)r^2\right]^{\frac{1}{1-\alpha}}-1,
\eeq
which, when extended by $(0,0,-1)$ outside the disk $M'$ gives a $C^2$
map $\R^2\ra S^2$ of degree $1$ solving the field equation everywhere.
This (up to reparametrization) is the compacton reported by 
Adam et al \cite{ada1}. Note that one can obtain its key qualitative features
without solving any equations, e.g.\ it occupies area 
$\frac{2^{2-\alpha}}{1-\alpha}\pi$ and has total energy
\beq
E=
4\pi\ip{U}=\frac{4\pi}{{\rm Vol}(S^2)}\int_{S^2}(1+Z)^\alpha \d\Theta\wedge
\d Z
=\frac{2^{\alpha+2}}{\alpha+1}\pi.
\eeq

Another interesting choice is
\beq\label{funkyU}
U(\phi)=(1+\phi_3)^\alpha(1-\phi_3)^\beta
\eeq
where $\alpha,\beta\in[\frac12,1)$. Now $N'$ is diffeomorphic to a
cylinder
and has finite total area $A(\alpha,\beta)$,
a complicated function of $\alpha, \beta$ involving hypergeometric functions.
An embedding of $N'$ as a surface of revolution
in the case $\alpha=0.5$, $\beta=0.7$  is depicted in figure \ref{cigar}(d).
One can take $M'$ to be any annulus
of total area $A$, 
\beq
M'=\{(x,y)\: :\:  R_1^2<x^2+y^2<R_2^2\}\qquad
\mbox{
where}\quad \pi(R_2^2-R_1^2)=A,
\eeq
 and 
construct an area-preserving diffeomorphism
$\psi:M'\ra N'$ within the ansatz (\ref{hhog}), then extend this
by $(0,0,-1)$ for $r\leq R_1$, and $(0,0,1)$ for $r\geq R_2$.
It is straightforward to check that this field is $C^2$, and hence 
defines a ringlike compacton. By choosing $R_2$ sufficiently large (and
$R_1$ close to $R_2$) this ring can be arbitrarily big. Hence one can construct
$n$-compactons, with $n$ rings nested inside one another, as well as
the more obvious multi-ring solutions. Adam et al consider only the
degenerate case that the annulus is a punctured disk \cite{ada1}, so we
shall go through this construction in more detail. 

The deformed area form (in cylindrical coordinates) is
$\Omega=(1+Z)^{-\alpha}(1-Z)^{-\beta}\d\Theta\wedge\d Z$, so a field within the
ansatz (\ref{hhog}) satisfies the Bogomol'nyi equation if and only if
\beq\label{qw}
\frac{z'}{(1+z)^\alpha(1-z)^\beta}=-r.
\eeq
Define the function
\beq
Q:[-1,1]\ra [0,A/(2\pi)],\qquad
Q(Z)=\int_{-1}^Z\frac{dt}{(1+t)^\alpha(1-t)^\beta}.
\eeq
Then $Q^{-1}$ is an increasing, surjective
 $C^2$ map $[0,A/(2\pi)]\ra [-1,1]$, and
\beq
z(r)=Q^{-1}(C-\frac{r^2}{2})
\eeq
solves (\ref{qw}) for any constant $C>0$. We require $\phi(0,0)=(0,0,1)$,
so insist that $C>A/(2\pi)$. Set $R_1=\sqrt{2C}$ and define $R_2>R_1$
such that $\pi(R_2^2-R_1^2)=A$. Then the
extended profile function is
\beq
z(r)=\left\{\begin{array}{cc}
1 & 0\leq r\leq R_1\\
Q^{-1}(\frac12(R_1^2-r^2)) & R_1<r<R_2\\
-1 & r\geq R_2\end{array}\right.
\eeq
Note that the associated field has support in an annulus of total area $A$,
as expected. Note also that it is constant in a neighbourhood of the
(polar) coordinate singularity at $r=0$, so to check that $\phi$ is $C^2$,
it suffices to check that $z(r)$ is $C^2$. This is clear, except at the 
points $r=R_1$, where $z=-1$ and $r=R_2$, where $z=1$.  By the Bogomol'nyi
equation,
\beq\label{er}
z'(r)=-r(1+z(r))^\alpha(1-z(r))^\beta
\eeq
on $(R_1,R_2)$, whence $\lim_{r\ra R_1^+}z'(r)=\lim_{r\ra R_2^-}z'(r)=0$.
Hence $z(r)$ is $C^1$. Differentiating (\ref{er}),
\bea
z''(r)&=&-(1+z(r))^\alpha(1-z(r))^\beta\nonumber\\
&&-r^2\{\alpha(1+z(r))^{2\alpha-1}(1-z(r))^{2\beta}
-\beta(1+z(r))^{2\alpha}(1-z(r))^{2\beta-1}\}
\eea
on $(R_1,R_2)$, whence $\lim_{r\ra R_1^+}z''(r)=\lim_{r\ra R_2^-}z''(r)=0$ also
(note $\alpha,\beta\geq \frac12$). Hence $z(r)$ is $C^2$.

One can precompose this map with an arbitrary area-preserving
diffeomorphism $\aa:\R^2\ra\R^2$ to obtained deformed ring-like compactons.
Choosing $R_2$ very close to $R_1$, then deforming, produces
closed  string-like compactons.
In fact, one can precompose it with a degree $n$ area-preserving
covering map
\beq
\aa:\R^2\less\{(0,0)\}\ra R^2\less \{(0,0)\},\qquad
\aa:(r,\theta)\mapsto (n^{-\frac12}r,n\theta)
\eeq
to obtain a degree $n$ annular compacton. Unlike the single vacuum case,
this is still $C^2$ (even at the origin) because the degree $1$ compacton
is constant on a neighbourhood of the origin.

Finally, consider the case of potential (\ref{funkyU}) in the case
$\beta=1$. This supports a $C^2$ degree 1 energy minimizer which decays
like $\exp(-r^2/2)$ to $(0,0,1)$ as $r\ra\infty$, and is exactly
$(0,0,-1)$ on any closed disk centred on the origin. By precomposing
this with appropriate area preserving maps, as in the previous section, we
can produce a semi-compacton localized in a semi-infinite strip, with
a $x^{-\log x}$ tail, but with a hole (of any finite area) in the middle of
the lump, where it has exactly zero energy. Clearly, by introducing
more vacua, one can dream up models with even more bizarre energy
minmizers.

\section{Concluding remarks}
\news

We have shown that the extreme baby-Skyrme model with energy 
\beq
E=\frac12\int_{\R^2}\{[\phi\cdot(\phi_x\times\phi_y)]^2+U(\phi)^2\}dx\, dy,
\qquad U(\phi)=1+\phi_3
\eeq
supports, in every homotopy class, semi-compacton solutions of
quantized energy $E=4\pi(|n|+2k)$ where $n\in\Z$ is the degree of $\phi$
and $k$ is a non-negative integer. These solutions are (at least) twice
continuously differentiable everywhere, and consist of $|n|+2k$ (anti-)lumps,
each localized in a semi-infinite strip. Each lump has a tail 
escaping to infinity, along which the energy density decays like $s^{-\log s}$,
where $s$ is a length variable along the strip. All these solutions are
at least marginally stable, and when $k=0$, are global energy minimizers
in their homotopy class. 

Replacing the potential term by $U=(1+\phi_3)^\alpha$, $\frac12\leq\alpha<1$
we have given a geometric interpretation to the construction of compactons
proposed in \cite{ada1}, and clarified the conditions under which these are
$C^2$ (hence classical solutions of the field equation). In the case of
two-vacuum potentials $U=(1+\phi_3)^\alpha(1-\phi_3)^\beta$, we have constructed
annular compactons, and described how these can be embedded inside one 
another, and deformed into closed string-like solutions.

It is interesting to compare this situation with the case where
$M$ is compact. The role of the potential term $\frac12 U^2$ 
on $\R^2$ is to prevent
lumps dissipating by spreading indefinitely. 
On compact $M$, the very compactness of $M$ 
does this job, so one might
expect that similar results (existence of minimizers in every
homotopy class) might hold here in the simple case $U=0$. This turns out
to be entirely false. Indeed, it was shown in \cite{spesve2} that all
critical points of $E_4(\phi)$ on a compact Riemann surface have 
$\phi^*\omega$
coclosed. Now $\phi^*\omega$ is automatically closed for all $\phi$
(since $\d\phi^*\omega=\phi^*\d\omega=0$), so if $\phi$ solves the
field equation for $E_4$, $\phi^*\omega$ is harmonic. Hence, by
the Hodge Theorem, $\phi^*\omega={\rm constant}\times\vol_M$, that is,
$\phi:M\ra N$ is, up to a homothety of $(M,g)$, an area-preserving 
covering map, or $E_4(\phi)=0$.
 So if the target is $N=S^2$, any solution either has degree $0$,
or is an area-preserving
diffeomorphism $M\ra S^2$
(since $S^2$ is simply connected, any covering map
is a diffeomorphism). It follows that
 if $M=S^2$, the model has solutions only in the 
degree $-1,0,1$ classes, while if $M$ is any other compact Riemann surface,
it has only trivial (degree $0$, energy $0$) solutions. The contrast
with $M=\R^2$ and $U\neq0$ is striking.

The results of this paper raise two obvious interesting questions.
First, can one understand the {\em moduli space} of degree $1$ energy
minimizers of this model? What about the {\em reduced} moduli space,
that is, the set of minimizers modulo
the action of the group of area-preserving diffeomorphisms of $\R^2$?
Clearly, the radially symmetric lump $\psi$, the half lump
$\phi_+$ and the semi-compacton $\phi_\sqsubset$ are three different points
in this space. Do they lie in the same connected component? Is the moduli 
space, in fact, connected? If so, can it be given a manifold structure?
If not, can its components be enumerated? Such questions are mathematically
well-defined (for example, we can give the set of all maps the compact-open
topology, the moduli space the relative topology from this, and
the reduced moduli space the quotient topology from this) but seem formidably
challenging.

Second, can one study the {\em dynamics} of semi-compactons? This
question is rather subtle, because the Euler-Lagrange
equation descending from the obvious Lorentz-invariant
time-dependent extension of the model, with Lagrangian density
\beq\label{ms}
\lll =\frac14[\phi\cdot(\cd_\mu\phi\times\cd_\nu\phi)][\phi\cdot(\cd^\mu\phi
\times\cd^\nu\phi)]-\frac12U(\phi)^2
\eeq
is not a true evolution equation. The problem is that, at any
spatial point $(x,y)\in\R^2$ where $\phi_x$, $\phi_y$ do not span
$T_\phi S^2$ (that is, at any critical point of $\phi(t,\cdot):\R^2\ra S^2$),
the fields $\phi$ and  $\phi_t$, do not uniquely determine $\phi_{tt}$.
In particular, the Cauchy problem for any initial data $\phi(0)$, $\phi_t(0)$
is ill-defined if $\phi(0)$ has any critical points. This is immediately
a problem for any initial data of degree $\geq 2$, since any such field has
critical points by topological considerations. For semi-compactons, the
problem is particularly severe, since these are critical on unbounded regions
of $\R^2$. If the moduli space of semi-compactons can be understood,
one could perhaps study the dynamics of a single semi-compacton within the
geodesic approximation. There are some indications that the kinetic
energy functional of (\ref{ms}) equips the moduli space, at least formally,
with an incomplete Riemannian metric. Less speculatively, one could
abandon Lorentz invariance (which is, in any case, an unnatural assumption
for condensed matter applications) and give the model the
usual kinetic energy term, that is,
\beq
\lll=\frac12\phi_t\cdot\phi_t-\frac12[\phi\cdot(\phi_x\times\phi_y)]^2-\frac12
U(\phi)^2.
\eeq
The Euler-Langrange equation is now a genuine evolution equation, although
it is not technically hyperbolic.
It would be interesting, and 
numerically straightforward,
 to study the scattering of semi-compactons in this
model.

\section*{Acknowledgements}
This work was partially funded by the UK Engineering and Physical Sciences
Research Council.

\end{document}